\newcommand{\de}{\partial}
\newcommand{\R}{\mathbb R}
\newcommand{\N}{\mathbb N}
\newcommand{\vc}[1]{\bm{#1}}
\newcommand{\leb}[2]{{L}^{#1}({#2})}
\newcommand{\sob}[2]{{W}^{#1}({#2})}
\newcommand{\crs}{\mathcal{A}}
\newcommand{\tpl}{\mathsf{w}}
\newcommand{\ha}{\mathcal{H}}
\newcommand{\loo}{{\Lambda[\tpl]}}
\newcommand{\wto}{\rightharpoonup}
\newcommand{\res}{\mathop{\hbox{\vrule height 7pt width .5pt depth 0pt
\vrule height .5pt width 6pt depth 0pt}}\nolimits}
\newtheorem{thm}{Theorem}[section]
\newtheorem{lem}[thm]{Lemma}
\theoremstyle{definition}
\newtheorem{defn}[thm]{Definition}
\numberwithin{equation}{section}
\begin{document}

\title{Solution of the Kirchhoff--Plateau problem}

\author[1]{Giulio G.~Giusteri\thanks{\texttt{giulio.giusteri@oist.jp}}}
\author[2]{Luca Lussardi\thanks{\texttt{luca.lussardi@unicatt.it}}}
\author[1]{Eliot Fried\thanks{\texttt{eliot.fried@oist.jp}}}
\affil[1]{Mathematical Soft Matter Unit, Okinawa Institute of Science and Technology Graduate University, 1919-1 Tancha, Onna, Okinawa, 904-0495, Japan}
\affil[2]{Dipartimento di Matematica e Fisica, Universit\`a Cattolica del Sacro Cuore, via Musei 41, I-25121 Brescia, Italy}


\date{27 December 2016}

\maketitle

\begin{abstract}
The Kirchhoff--Plateau problem concerns the equilibrium shapes of a system in which a flexible filament in the form of a closed loop is spanned by a liquid film, with the filament being modeled as a Kirchhoff rod and the action of the spanning surface being solely due to surface tension. We establish the existence of an equilibrium shape that minimizes the total energy of the system under the physical constraint of non-interpenetration of matter, but allowing for points on the surface of the bounding loop to come into contact. In our treatment, the bounding loop retains a finite cross-sectional thickness and a nonvanishing volume, while the liquid film is represented by a set with finite two-dimensional Hausdorff measure. Moreover, the region where the liquid film touches the surface of the bounding loop is not prescribed a priori. Our mathematical results substantiate the physical relevance of the chosen model. Indeed, no matter how strong is the competition between surface tension and the elastic response of the filament, the system is always able to adjust to achieve a configuration that complies with the physical constraints encountered in experiments.
\end{abstract}

\section{Introduction}

Liquid films spanning rigid frames have been of longstanding interest to physicists and mathematicians, thanks to the sheer beauty of the countless observable shapes.  After the experimental investigations of Plateau \cite{Pla49}, anticipated by Lagrange's \cite{Lag60} theoretical treatment of the minimal surface problem, the first satisfactory proofs of the existence of a surface of least area bounded by a fixed contour were provided only in the twentieth century by Garnier \cite{Gar28}, Rad\'o \cite{Rad30}, and Douglas \cite{Dou31}. This formed a basis for a wealth of mathematical investigations regarding minimal surfaces, concerning various aspects and generalizations of the classical Plateau problem. The interested reader is referred to the treatises by Dierkes, Hildebrandt \& Sauvigny \cite{DieHilSau10} and Dierkes, Hildebrandt \& Tromba \cite{DieHilTro10} for a comprehensive review of the formative contributions.

{An important generalization to the situation in which the boundary of the minimal surface is not fixed but is constrained to lie on a prescribed manifold was initially treated by Courant \cite{Cou40} and Lewy \cite{Lew51}, whose work spurred a number of important mathematical contributions, as reviewed by Li \cite{Li15}. An existence theorem for a complementary generalization, in which part of the boundary is fixed and the remaining part is an inextensible but flexible string, was later proved by Alt \cite{Alt73}. In the present article, we introduce a problem which combines those generalizations. We consider situations in which the boundary of the minimal surface lies on a deformable manifold, namely the surface of an elastic loop. The filament forming the loop is assumed to be thin enough to be modeled faithfully by a Kirchhoff rod, an unshearable inextensible rod which can sustain bending of its midline and twisting of its cross-sections (see Antman \cite[Chapter VIII]{Antman2005}). This is a mathematically one-dimensional theory that describes a three-dimensional object, endowed with a nonvanishing volume, since the material cross-sections have nonvanishing area. The Kirchhoff--Plateau problem concerns the equilibrium shapes of a system composed by a closed Kirchhoff rod spanned by an area-minimizing surface.}

{In recent years, some attention has been drawn to the Kirchhoff--Plateau problem following a paper by Giomi \& Mahadevan \cite{GioMah12}, and the stability properties of flat circular solutions have been investigated, under various conditions regarding the material properties of the rod, by Chen \& Fried \cite{CheFri14}, Biria \& Fried \cite{BirFri14,BirFri15}, Giusteri, Franceschini \& Fried \cite{GiuFra16}, and Hoang \& Fried \cite{HoaFri16}.    
An existence result for a similar problem was given by Bernatzky \& Ye \cite{BerYe01}, but the elastic energy used therein fails to satisfy the basic physical requirement of invariance under superposed rigid transformations and thereby implicitly entails the appearance of unphysical forces. Moreover, a strong hypothesis is used to avoid the issue of self-contact. }

{Importantly, in all of these studies the boundary of the spanning surface is assumed to coincide with the rod midline and not to lie on the surface of the rod. This amounts to a slenderness assumption. Moreover, the surface is viewed as diffeomorphic to a disk, except by Bernatzky \& Ye \cite{BerYe01}, who employ the theory of currents. We relax both of these assumptions. Regarding the filament, we retain its three-dimensional nature for two physically well-justified reasons. First, there is a significant separation of scales between the typical thickness of the liquid film and the cross-sectional thickness of the filaments used in experimental investigations: a minimum of two orders of magnitude. Consequently, while the liquid film is still represented as two-dimensional, it seems appropriate to treat the bounding loop as a three-dimensional object. Second, the possibility of generating nontrivial shapes due to the interaction between the film and the bounding loop relies on the presence of anisotropic material properties of the filament, which are often associated with how its cross-sections are shaped.}

{Furthermore, the physical presence of the bounding loop requires a proper treatment of the constraint of non-interpenetration of matter, which is clearly at play in real experiments and even becomes essential, since the bounding loop can sustain large deflections but remains constrained when self-contact occurs. If, in particular, the relative strength of surface tension with respect to the elastic response of the filament becomes large, then the compliance of the mathematical solution with physical requirements can only be guaranteed by including the non-interpenetration constraint.} To include all these properties in a variational framework, we base our treatment of rod elasticity on Schuricht's \cite{Sch02} elegant approach. We introduce a minor simplification in the presentation of the constraint of local non-intrepenetration of matter, obtaining equivalent results specialized to the case of Kirchhoff rods.

In developing our variational approach, the most delicate point involves the treatment of the spanning surface. The physical phenomenon at play is the minimization of the liquid surface area due to the presence of a homogeneous surface tension. Various mathematical models have been proposed with different characteristics. Representing the surface via a mapping from a manifold into the ambient space, despite being the first successful approach, poses severe and completely unphysical limitations on the topology of the surface. To cope with this issue, the theories of integral currents and of varifolds were applied to the Plateau problem by Federer \& Fleming \cite{FedFle60} and Almgren \cite{Alm68}, respectively. However, their approaches also fail either to cover all the physical soap film solutions to the Plateau problem or to furnish a sufficiently general existence result. An alternate route was initiated by Reifenberg \cite{Rei60}, who treated the surface as a point set that minimizes the two-dimensional Hausdorff measure. This purely spatial point of view, adopted also by De Pauw \cite{DeP09} and David \cite{Dav14}, deals nicely with the topology of solutions, but makes it difficult to handle a generic boundary condition. A more complete treatment, which covers all the soap film solutions to the Plateau problem, has been developed by Harrison \cite{Har14} and Harrison \& Pugh \cite{HarPug13} using differential chains. In that setting, it is possible to consider generic bounding curves due to the presence of a well-defined boundary operator.

In relation to the problem at hand, the approach to the Plateau problem due to De Lellis, Ghiraldin \& Maggi \cite{DeLGhi14} proves to be superior in many respects. First and foremost, De Lellis, Ghiraldin \& Maggi \cite{DeLGhi14} treat the surface as the support of a Radon measure, adopting a spatial point of view, and thereby obtain the optimal soap film regularity defined by Almgren \cite{Alm76} and Taylor \cite{Tay76}. Moreover, their definition of the spanning conditions, built on ideas of Harrison \cite{Har14} (further developed by Harrison \& Pugh \cite{HarPug13,HarPug16}), allows for an apt treatment of the free-boundary problem, whereas all prior approaches become rather difficult to use when the boundary of the spanning surface is not prescribed. Finally, their strategy has the physically relevant property of being insensitive to changes in the topology of the spanning surface, which can easily be observed during the relaxation to equilibrium of the system with an elastic bounding loop. Another interesting approach that allows for treatment of the free-boundary problem is provided in a recent paper by Fang~\cite{Fan16}, but its generalization to the case of a deformable bounding loop seems to require a more sophisticated apparatus than that employed in our treatment.

{We present the formulation of the Kirchhoff--Plateau problem in Section \ref{sec:formulation}, where we introduce the energy functionals pertaining to the parts of our system and a suitable expression of the spanning condition.
Of particular importance is also the introduction of various physical and topological constraints regarding the non-interpenetration of matter and the link and knot type of the bounding loop. Indeed, those natural requirements necessitate a variational approach to the problem, since they entail an inherent lack of smoothness, leaving open the question of the validity of the Euler--Lagrange equations at energy-minimizing configurations. Furthermore, they hinder the convexity properties of a seemingly innocuous functional. This becomes even more evident when the surface energy is added to the picture, so that the presence of multiple stable and unstable equilibria cannot in general be precluded.}

{We prove our main existence result in Section \ref{sec:existence}, subsequent to establishing some preliminary facts.
The essential feature of our treatment is a dimensional reduction which is performed by expressing the total energy of the system (bounding loop plus spanning surface) as a functional of the geometric descriptors of the bounding loop only. This is done by introducing a strongly nonlocal term entailing the minimization of the surface energy for a fixed shape of the bounding loop. Clearly, this step is justified by the existence of a solution to the problem of finding an area-minimizing surface spanning a three-dimensional bounding loop. We prove this (Theorem \ref{thm:Plateau}) as a direct application of a result established by De Lellis, Ghiraldin \& Maggi \cite[Theorem 4]{DeLGhi14}.
Subsequently, it is necessary to adapt the arguments of De Lellis, Ghiraldin \& Maggi \cite{DeLGhi14} to our situation, in which the set that is spanned by the surface changes along minimizing sequences for the Kirchhoff--Plateau energy. This is accomplished in Lemma \ref{lem:diagonal} and Lemma \ref{lem:intersection} and, based on these results, we establish the lower semicontinuity property that is needed to establish, in Theorem \ref{thm:existence}, the existence of a solution to the Kirchhoff--Plateau problem.}

\section{Formulation of the problem}\label{sec:formulation}

We seek to study the existence of a stable configuration of a liquid film that spans a flexible loop. Such a configuration is in reality metastable, since the liquid film will eventually break after becoming sufficiently thin, but we confine our attention to what happens before any such catastrophic event.

The flexibility of the bounding loop represents a major difference between our problem and the Plateau problem, in which the liquid film spans a fixed boundary. It is therefore essential to model the elastic behavior of the loop in response to deformations, requiring a physical description much more sophisticated than that sufficient for a fixed boundary. 
In particular, we consider a loop formed by a slender filament with a nonvanishing cross-sectional thickness and subject to the physical constraint of non-interpenetration of matter. As discussed below, it is still reasonable to approximate the liquid film by a surface, since its thickness is typically much smaller than the cross-sectional thickness of the bounding loop, but an appropriate definition of how the surface \emph{spans the bounding loop} is necessary.

We next introduce the precise mathematical definitions needed to formulate the Kirchhoff--Plateau problem in a way that takes into account the physical requirements mentioned above. In so doing, we follow a variational approach by defining the energy of the different components of our system.

\subsection{Energy of the bounding loop}

\subsubsection{Preliminary considerations and assumptions}

The main assumption we impose on the bounding loop is that its length is much larger than the characteristic thickness of its cross-sections, which allows us to employ the theory of rods in our description. Within that theory, as presented for example by Antman~\cite{Antman2005},
a rod is fully described by a curve in the three-dimensional Euclidean space, called the {midline}, a family of two-dimensional sets, describing the {material cross-section} at each point of the midline, and a family of {material frames}, encoding how the cross-sections are ``appended'' to the midline. Such a family of material frames corresponds also to a curve in the group of rotations of the three-dimensional space.

We also assume that the rod is unshearable (namely, that the material cross-section at any point of the midline lies in the plane orthogonal to the midline at that point) and that its midline is inextensible. Together, these assumptions amount to choosing a Kirchhoff rod as a model for the filament from which the bounding loop is made. Under these assumptions, the shape of the loop is uniquely determined by assigning the shape of the cross-sections and three scalar fields: two {flexural densities} $\kappa_1$ and $\kappa_2$ together with a {twist density} $\omega$. 
From these fields we can reconstruct the midline $\vc x$ and a director field $\vc d$, orthogonal to the tangent field $\vc t:=\vc x'$, that gives the material frame as $\{(\vc t(s),\vc d(s),\vc t(s)\times\vc d(s)):s\in[0,L]\}$, where $s$ is the arc-length parameter and $L$ is the total length of the midline. Indeed, once suitable conditions at $s=0$ are assumed, the fields $\vc x$ and $\vc d$ are the unique solution of the system of ordinary differential equations
\begin{equation}\label{eq:Cauchy}
\left\{\begin{aligned}
&\vc x'(s)=\vc t(s)\,,\\
&\vc t'(s)=\kappa_1(s)\vc d(s)+\kappa_2(s)\vc t(s)\times\vc d(s)\,,\\
&\vc d'(s)=\omega(s)\vc t(s)\times\vc d(s)-\kappa_1(s)\vc t(s)\,,
\end{aligned}\right.
\end{equation}
for $s$ in $[0,L]$, supplemented by the initial conditions
\begin{equation}\label{eq:CauchyIC}
\left\{\begin{aligned}
&\vc x(0)=\vc x_0\,,\\
&\vc t(0)=\vc t_0\,,\\
&\vc d(0)=\vc d_0\,.
\end{aligned}\right.
\end{equation}
It is not a priori granted that the solution to \eqref{eq:Cauchy}--\eqref{eq:CauchyIC} describes a closed loop. This property, being essential to the treatment of the Kirchhoff--Plateau problem, is imposed later as a constraint on the variational problem.

For the variational problem that we plan to investigate, it is convenient to assume that each of the densities $\kappa_1$, $\kappa_2$, and $\omega$ belongs to the Lebesgue space $\leb{p}{[0,L];\R}$ for some $p$ in $(1,\infty)$. This, by a classical result of Carath\'eodory (see, for instance, Hartman \cite{Hartman1982}), ensures that \eqref{eq:Cauchy}--\eqref{eq:CauchyIC} has a unique solution, with $\vc x$ in $\sob{2,p}{[0,L];\R^3}$ and $\vc d$ in $\sob{1,p}{[0,L];\R^3}$, where $\sob{n,p}{[0,L];\R^d}$ denotes the Sobolev space of measurable functions from $[0,L]$ to $\R^d$ with $n$ distributional derivatives in $\leb{p}{[0,L];\R^d}$. We further assume $|\vc t_0|=|\vc d_0|=1$. On this basis, we can use the structure of \eqref{eq:Cauchy}$_{2,3}$ to prove that $|\vc t(s)|=|\vc d(s)|=1$  for every $s$ in $[0,L]$.

The material cross-section at each $s$ is given by a compact simply connected domain $\crs(s)$ of $\R^2$ such that the origin $\vc 0_2$ of $\R^2$ belongs to $\mathrm{int}(\crs(s))$. A rod of finite cross-sectional thickness can then be described as the image in the three-dimensional Euclidean space of the set
\begin{equation*}
\Omega:=\big\{(s,\zeta_1,\zeta_2):s\in[0,L]\text{ and }(\zeta_1,\zeta_2)\in\crs(s)\big\}
\end{equation*}
through the map
\begin{equation}\label{eq:configuration}
\vc p(s,\zeta_1,\zeta_2):=\vc x(s)+\zeta_1\vc d(s)+\zeta_2\vc t(s)\times\vc d(s)\,.
\end{equation}
By our assumptions, there exists an $R>0$ such that $|\zeta_1|<R$ and $|\zeta_2|<R$ for any $(s,\zeta_1,\zeta_2)$ in $\Omega$. We remark that, for the rod model to be a faithful representation of the mechanics of the filament from which the bounding loop is made, it is necessary that the maximum thickness $R$ be small compared to the length $L$ of the loop.

Once the family of material cross-sections is assigned, any configuration of the rod in the (three-dimensional) ambient space corresponds to an element
\[
\tpl:=((\kappa_1,\kappa_2,\omega),\vc x_0,\vc t_0,\vc d_0)
\]
belonging to the Banach space
\[
V:=\leb{p}{[0,L];\R^3}\times\R^3\times\R^3\times\R^3\,.
\]
Whereas all of the information regarding the \emph{shape} of the rod is encoded in the flexural and twist densities, namely in the component $\tpl_1=(\kappa_1,\kappa_2,\omega)$ of $\tpl$ belonging to $\leb{p}{[0,L];\R^3}$, the {clamping parameters} $\vc x_0$, $\vc t_0$, and $\vc d_0$ determine how the rod is translated and rotated in the ambient space. We denote the midline, the director field, and the rod configuration computed from \eqref{eq:Cauchy}--\eqref{eq:CauchyIC} and \eqref{eq:configuration} for a given $\tpl$ in $V$ as $\vc x[\tpl]$, $\vc d[\tpl]$, and $\vc p[\tpl]$, respectively, and the bounding loop occupies the subset $\loo:=\vc p[\tpl](\Omega)$ of $\R^3$.

\subsubsection{Individual contributions to the energy of the bounding loop}

We are now positioned to define the energy of the rod as a functional on $V$. We consider three contributions entering this functional in an additive way: (i) the stored elastic energy, related with shape modifications; (ii) the non-interpenetration constraint; and (iii) the potential energy of an external load, such as the weight of the rod.

The first term, being related only to shape deformations, can be expressed as the integral of an elastic energy density which depends only on $s$ and $\tpl_1$. We then introduce $f:\R^3\times[0,L]\to\R\cup\{+\infty\}$ and define the shape energy of the bounding loop as
\[
E_\mathrm{sh}(\tpl):=\int_0^Lf(\tpl_1(s),s)\,ds\,,
\]
and we assume that $f(\cdot,s)$ is continuous and convex for any $s$ in $[0,L]$, that $f(\mathsf{a},\cdot)$ is measurable for any $\mathsf{a}$ in $\R^3$, and that $f(\mathsf{a},s)$ is uniformly bounded below by a constant. These assumptions guarantee that $E_\mathrm{sh}$ is weakly lower semicontinuous on the reflexive Banach space $V$---a key property in applying the direct method of the calculus of variations. To ensure the necessary coercivity, we also impose the natural growth condition
\[
f(\mathsf{a},s)\geq C_1|\mathsf{a}|^p+C_2\,,
\] 
with $C_1>0$ and $C_2$ in $\R$.

To include the local non-interpenetration constraint, we use the characteristic function of the closed subset $N$ of $V$ containing those elements $\tpl$ such that
\begin{equation}\label{eq:ni}
\max_{(\zeta_1,\zeta_2)\in\crs(s)}\big(\zeta_1\kappa_2(s)-\zeta_2\kappa_1(s)\big) \leq 1
\end{equation}
for almost every $s$ in $[0,L]$. We thus add to the energy of the loop the term
\[
E_{\mathrm{ni}}(\tpl):=
\begin{cases}
0 & \text{if }\tpl\in N\,,\\
+\infty & \text{if }\tpl\in V\setminus N\,.
\end{cases}
\]
Condition \eqref{eq:ni} can be derived by relaxing the standard requirements that $\vc p[\tpl]$ be orientation preserving and locally injective (see Antman~\cite[Chap.~VIII.7]{Antman2005} and Schuricht~\cite{Sch02}). We will prove, in Theorem~\ref{thm:injectivity}, that our penalization strategy (which differs slightly from that of Schuricht~\cite{Sch02}) is sufficient to guarantee the local injectivity of $\vc p[\tpl]$ on $\mathrm{int}(\Omega)$ for configurations with finite energy. The concept of local interpenetration of matter is illustrated in Figure \ref{fig:injectivity}.

\begin{figure}
\begin{center}
\includegraphics[width=0.7\textwidth]{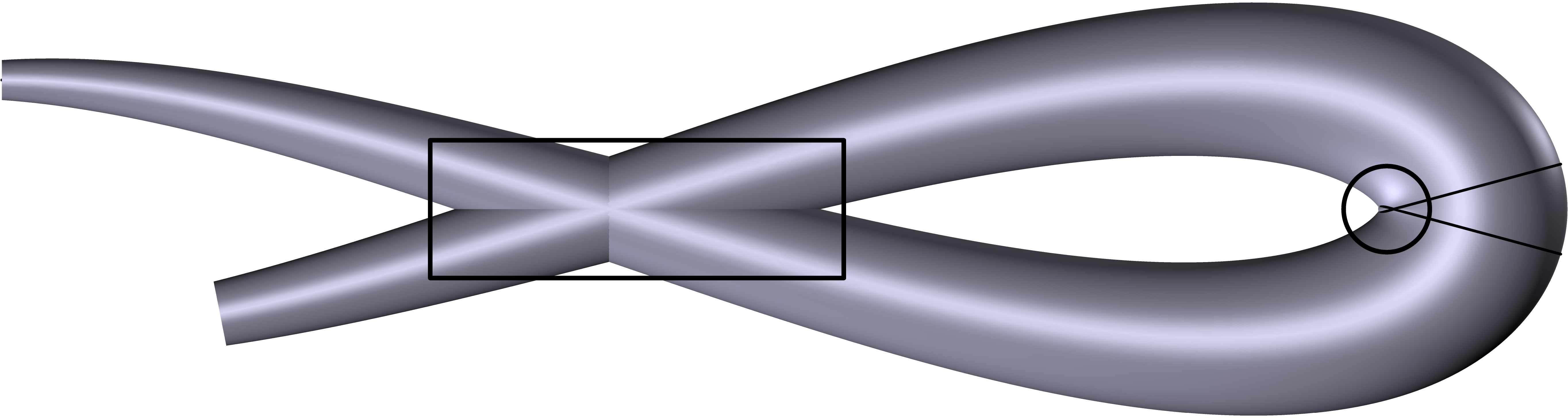}\hspace{.2cm}
\includegraphics[width=0.18\textwidth]{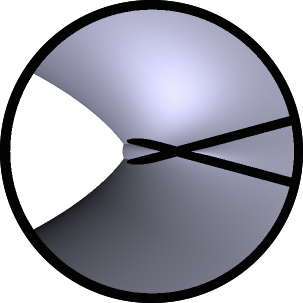}
\end{center}
\caption{Local injectivity fails when adjacent material cross-sections interpenetrate. This is illustrated in the encircled region (magnified on the right) where two material cross-sections, traced on the surface, overlap, due to the excessive curvature of the midline. Global injectivity fails also when interpenetration occurs between cross-sections that lie far apart along the midline. This situation, in which local injectivity is not hindered, is depicted in the framed region.}
\label{fig:injectivity}
\end{figure}

Finally, we account for
the effects of the weight of the rod by considering the potential energy
\[
E_\mathrm{g}(\tpl):=-\int_\Omega \rho(s,\zeta_1,\zeta_2)\vc g\cdot\vc p[\tpl](s,\zeta_1,\zeta_2)
 \, d(s,\zeta_1,\zeta_2)\,,
\]
where $\rho>0$ represents the mass density at each point of the rod and the vector $\vc g$ represents the constant gravitational acceleration.

From the above definitions, we obtain a weakly lower semicontinuous functional
\begin{equation}\label{eq:Eloop}
E_\mathrm{loop}(\tpl):=E_\mathrm{sh}(\tpl)+E_{\mathrm{ni}}(\tpl)+E_\mathrm{g}(\tpl)
\end{equation}
representing the total energy of the bounding loop.

\subsection{Energy of the liquid film}

A fundamental tenet in physical chemistry is that he process of building an interface between two immiscible substances is accompanied by an energetic cost. Roughly speaking, each portion of each given substance prefers to be surrounded by the same substance and some interfacial energy is developed whenever this is not the case. For this reason, a droplet of water surrounded by air tends to assume a spherical shape: for a fixed droplet volume, it minimizes the area of the interface.

To produce a liquid film, it is necessary to counteract this tendency, since doing so entails stretching the droplet, necessarily increasing the area of the liquid/air interface. We operate in two ways. On one hand, the solid bounding loop provides a third substance to which the liquid is attracted, since the energy density per unit area of the liquid/solid interface is lower than that associated with the liquid/air interface. On the other hand, it is possible to tamper a bit with the liquid to further reduce the energetic cost of the liquid/air interface.

The second objective is accomplished by adding a small amount of surfactant to the liquid. Since the energy density per unit area of the liquid/air interface is lower for a higher surfactant concentration, surfactant molecules migrate towards the interface, leaving water in the bulk. In the liquid film configuration this produces two leaflets of surfactant phase (that lower the interfacial energy) covering a thin water layer (that provides a significant cohesion to the structure).
In this context, we can define the {surface tension} $\sigma$ of the liquid as the energy density per unit area of the liquid/air interface. It is physically reasonable to assume that $\sigma$ is a homogeneous positive quantity, representing the ratio between the total interfacial energy and the surface area of the interface.

To arrive at a mathematical model for the liquid film, we now combine a geometric approximation with the notion of interfacial energy discussed above. We assume that the thickness of the film (two surfactant leaflets plus the water layer) is negligible and we represent it as a two-dimensional object $S$, but we keep track of the fact that it is built with two surfactant leaflets. We thus define the energy of the liquid film as
\begin{equation}\label{eq:Efilm}
E_\mathrm{film}(S):=2\sigma \ha^2(S)\,,
\end{equation}
where $\ha^d$ represents the $d$-dimensional Hausdorff measure.

Two remarks are in order. First, there is no evidence in \eqref{eq:Efilm} of the energy associated with the liquid/solid interface along which the film is in contact with the bounding loop. This is, in principle, a significant contribution, but the energy barrier that must be overcome to detach the film from the bounding loop is so high that its effect can be replaced by the spanning condition, discussed in the next section, designed to prevent detachment, encoding in essence the infinite height of the aforementioned barrier. The small corrections to that energy due to the size and shape of the liquid/solid interface are negligible as far as detachment is concerned. The influence of those corrections on the shape energy of the film is also negligible if the film is assumed to be of vanishing thickness. Indeed, they would influence the shape of the liquid/solid interface at length scales that are smaller than its typical thickness and, hence, not captured by our model.

\begin{figure}
\begin{center}
\setlength{\unitlength}{1cm}
\begin{picture}(10,5)
\put(0,0){\includegraphics[width=10cm]{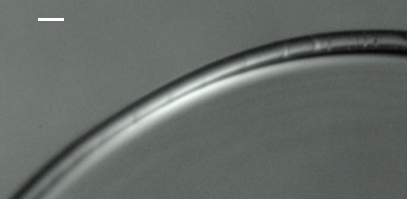}}
\thicklines
{\color{white}\put(6,0.7){\vector(-2,3){1.}}
\put(0.56,3.8){{\large0.2~mm}}}
\end{picture}
\end{center}
\caption{The thickness of the liquid film is orders of magnitude smaller than the cross-sectional thickness of the bounding loop. The white arrow points to the bright region where the thickness of the soap film slightly increases just before touching the bounding loop, realized using commercial fishing line with cross-sectional diameter of approximately $0.2$ mm. Since most of the filament surface is not covered by the liquid film, the thickness of the latter must be considerably less than $0.2$ mm.}
\label{fig:thicknesscomparison}
\end{figure}

Second, it is important to justify the vanishing-thickness approximation, especially in view of the nonvanishing thickness that we attribute to the rod modeling the bounding loop. The typical thickness of a soap film is on the sub-micron scale, while a bounding loop made of a strand of human hair would have a cross-section with characteristic thickness of some tens of microns: this indicates that in many practical examples of liquid films bounded by flexible loops the characteristic thickness of the loop is at least two orders of magnitude greater than the thickness of the film (see Figure~\ref{fig:thicknesscomparison}). 

\subsection{The spanning condition}\label{sec:spanning}

We now provide a precise mathematical formulation of the conditions stipulating that the liquid film spans the bounding loop. In so doing, we borrow an elegant idea, that exploits notions of algebraic topology, introduced by Harrison \cite{Har14} (and further developed by Harrison \& Pugh \cite{HarPug13,HarPug16}).
For our application it is convenient to present that idea in the form provided by De Lellis, Ghiraldin \& Maggi \cite{DeLGhi14}, specialized to the particular setting in which the ambient space is three-dimensional.

\begin{defn}
Let $H$ be a closed set in $\R^3$ and consider the family
\[
\mathcal C_H:=\big\{\gamma:S^1\to\R^3\setminus H:\gamma\text{ is a smooth embedding of }S^1\text{ into }\R^3\big\}\,.
\]

Then, any subset $\mathcal C$ of $\mathcal C_H$ is closed by homotopy (with respect to $H$) if $\mathcal C$ contains all the elements $\hat\gamma$ of $\mathcal C_H$ belonging to the same homotopy class $[\gamma]$ of any $\gamma$ in $\mathcal C$. Such homotopy classes are elements of the first fundamental group $\pi_1$ of $\R^3\setminus H$.

Given a subset $\mathcal C$ of $\mathcal C_H$ closed by homotopy, a relatively closed subset $K$ of $\R^3\setminus H$ is a \mbox{$\mathcal C$-spanning} set of $H$ if
$K\cap\gamma\neq\emptyset$ for every $\gamma$ in $\mathcal C$.
\end{defn}

\begin{figure}
\begin{center}
\includegraphics[width=0.9\textwidth]{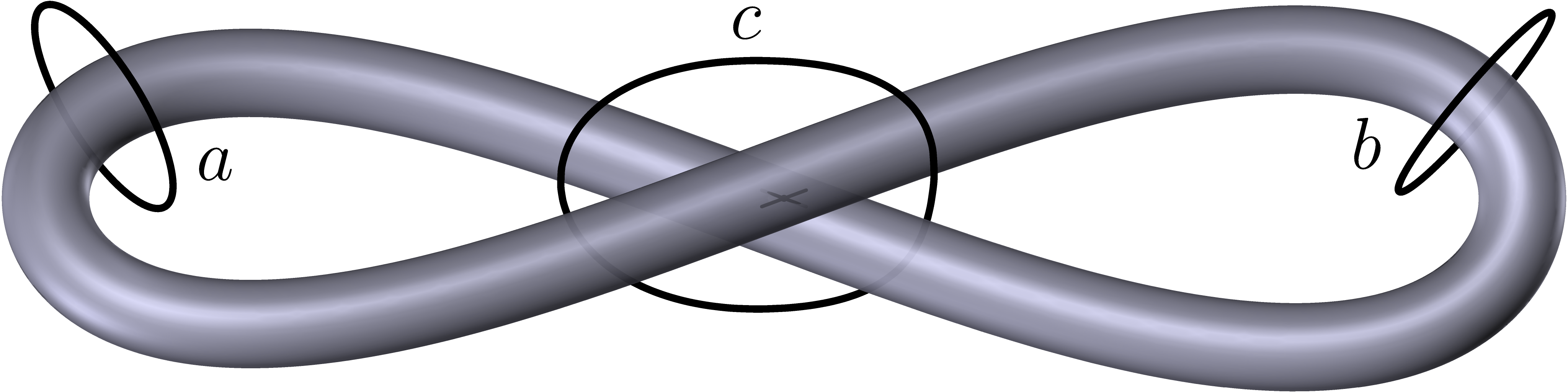}
\end{center}
\caption{An appropriate choice of homotopy classes determines which holes of a bounding loop with points of self-contact are covered. For the particular loop depicted here, which is subject to self contact without interpenetration at the central crossing point (black cross), if we seek a spanning set relative to the homotopy class of the loop $a$ or $b$, spanning surfaces that cover only the hole on the left or on the right will be allowed, respectively. If, instead, we consider the homotopy class of the loop $c$, both holes must be covered by the spanning set.}
\label{fig:spanning}
\end{figure}

The previous definition offers several advantages in comparison to more classical definitions. Most importantly, the set that is spanned by the surface representing the liquid film need not be a one-dimensional structure, it can be any closed set in the ambient space. This allows for the possibility that the surface spans the bounding loop $\loo$ of finite cross-sectional thickness. Another useful feature of this definition is that it allows for choice regarding the number of ``holes to be covered'' by the spanning set. Namely, when the set $H$ has a somewhat complex topology, it is permissible to restrict attention to surfaces that span only significant subregions of $H$. To give a trivial example, illustrated in Figure \ref{fig:spanning}, where $H$ is a loop subject to self contact, it is possible to seek surfaces that span one, the other, or both holes. {Notice that, as shown below, the non-interpenetration constraint still allows for points on the surface of the bounding loop to come into contact, since it entails only a superposition of points on the surface of the rod.}

In particular, the subset $\mathcal D_\loo$ of $\mathcal C_\loo$ containing all $\gamma$ that are not homotopic to a constant ($[\gamma]\neq 1_{\pi_1}$) is closed by homotopy. We will then seek a surface that is a \mbox{$\mathcal D_\loo$-spanning} set of the bounding loop $\loo$. This is a maximal choice in the sense that we cannot include paths homotopic to a constant in the spanning condition. Indeed, since $\loo$ is a compact set, it is easy to see that any \mbox{$\{1_{\pi_1}\}$-spanning} set of $\loo$ required to intersect all constant paths would fill up all of $\R^3\setminus\loo$. This would certainly not represent the behavior of a liquid film.

\subsection{The Kirchhoff--Plateau problem}

The basic step in connecting our mathematical model to the experimental observations consists in finding a minimizer $\tpl$ for the functional
\begin{equation}
E_\mathrm{KP}(\tpl):=E_\mathrm{loop}(\tpl)+\inf\big\{E_\mathrm{film}(S):S\text{ is a \mbox{$\mathcal D_\loo$-spanning} set of }\loo\big\}\,,
\end{equation}
where $\tpl$ belongs to a suitable subset $U$ of $V$, that encodes the additional physical and topological constraints on the rod modeling the bounding loop. These constraints are: (i) the closure of the midline; (ii) {the global gluing conditions;} (iii) global non-interpenetration of matter; (iv) the knot type of the midline.

When we combine the closure constraint with the definition of the Kirchhoff--Plateau functional $E_\mathrm{KP}$, a strong competition between the action of the spanning film and the elastic response of the bounding loop can arise. In a typical situation, the curvature of the loop tends to be minimized, producing somewhat wider configurations which, in turn, require spanning films with larger surface areas. The equilibrium shape is strongly influenced by the relative strength of surface tension with respect to the properties of the filament, but the proper inclusion of the global non-interpenetration constraint guarantees the physical relevance of the solutions the existence of which we establish.

We will now present, following Schuricht~\cite{Sch02}, the precise formulation of the aforementioned constraints and a lemma, proving that the set $U$ is weakly closed in $V$, as this is the essential property needed to establish the tractability of those constraints within our variational approach.

The fact that the midline is a closed curve can be readily expressed by
\begin{equation}\label{eq:closure}
\vc x[\tpl](L)=\vc x[\tpl](0)=\vc x_0\,,
\end{equation}
which we supplement with
\begin{equation}\label{eq:clamping}
\vc t[\tpl](L)=\vc t[\tpl](0)=\vc t_0\qquad\text{and}\qquad\vc d[\tpl](0)=\vc d_0\,.
\end{equation}
These clamping conditions \eqref{eq:closure}--\eqref{eq:clamping} are important in view of the preferred direction associated with the gravitational acceleration $\vc g$. Indeed, the weight term breaks the invariance of our problem under rigid rotations. Moreover, \eqref{eq:closure}--\eqref{eq:clamping} effectively describe the physical operation of holding the flexible structure at one point with tweezers. The right-hand sides of \eqref{eq:closure}--\eqref{eq:clamping}, namely $\vc x_0$, $\vc t_0$, and $\vc d_0$, are referred to as the clamping parameters.

Note that we do not require that $\vc d[\tpl](L)=\vc d[\tpl](0)$, since the rod can be glued fixing an arbitrary angle between $\vc d[\tpl](L)$ and $\vc d[\tpl](0)$ (local gluing condition) while simultaneously respecting the clamping conditions. To prescribe how many times the ends of the rod are twisted before being glued together, 
{we define the integer {link type} of the closed rod as the linking number of the closed midline $\vc x[\tpl]$ and the curve $\vc x[\tpl]+\epsilon\vc d[\tpl]$. Although a sufficiently small $\epsilon$ can always be chosen, the curve may need to be closed as indicated by Schuricht~\cite[Sect.~4.1]{Sch02}. The global gluing conditions are then fixed by prescribing the angle between $\vc d[\tpl](L)$ and $\vc d[\tpl](0)$ and the link type of the closed rod.}

The non-interpenetration of matter can be enforced through the global injectivity condition
\[
\int_\Omega\det\frac{\de\vc p[\tpl](s,\zeta_1,\zeta_2)}{\de(s,\zeta_1,\zeta_2)}\,d(s,\zeta_1,\zeta_2)\leq \ha^3(\vc p[\tpl](\Omega))\,,
\]
which, by \eqref{eq:configuration}, is equivalent to
\begin{equation}\label{eq:glob-inj}
\int_\Omega(1-\zeta_1\kappa_2(s)+\zeta_2\kappa_1(s))\,d(s,\zeta_1,\zeta_2)\leq \ha^3(\vc p[\tpl](\Omega))\,.
\end{equation}
Note that although condition \eqref{eq:glob-inj} implies the global injectivity of $\vc p[\tpl]$ only on $\mathrm{int}(\Omega)$, it is sufficient for our purposes, since we wish to allow for self-contact of the bounding loop, preventing only interpenetration. Clearly, global injectivity implies local injectivity, but the converse is not true. The local property is only a necessary condition for the global injectivity to hold. Indeed, as depicted in Figure \ref{fig:injectivity}, global injectivity may fail due to the overlapping of cross-sections belonging to regions of the rod that lie far apart along the midline and in which local injectivity holds true.

To encode the knot type of the midline, we invoke the notion of isotopy class for closed curves.
\begin{defn}
Let $\vc x_i:[0,L]\to\R^3$, $i=1,2$, be two continuous curves with $\vc x_i(L)=\vc x_i(0)$.
The curves $\vc x_1$ and $\vc x_2$ are called isotopic (denoted $\vc x_1\simeq\vc x_2$) if there are open neighborhoods $N_1$ of $\vc x_1([0,L])$ and $N_2$ of $\vc x_2([0,L])$ and a continuous mapping $\Phi:N_1\times[0,1]\to\R^3$ such that $\Phi(N_1,\tau)$ is homeomorphic to $N_1$ for all $\tau$ in $[0,1]$ and
\[
\Phi(\cdot,0)=\mathrm{Identity}\,,\quad\Phi(N_1,1)=N_2\,,\quad\text{and}\quad
\Phi(\vc x_1([0,L]),1)=\vc x_2([0,L])\,.
\]
\end{defn}
By means of isotopy classes we can encode the knot type of the bounding loop as follows. We fix a continuous mapping $\vc\ell:[0,L]\to\R^3$ such that $\vc\ell(L)=\vc\ell(0)$ and we say that an element $\tpl$ of $V$, for which the closure and clamping conditions \eqref{eq:closure}--\eqref{eq:clamping} hold, has the knot type of $\vc\ell$ if
\begin{equation}\label{eq:knot}
\vc x[\tpl]\simeq\vc\ell\,.
\end{equation}

We can now prove a theorem about the non-interpenetration constraint and our basic lemma about the weak closure of the set $U$ of competitors for our minimization problem. 
\begin{thm}\label{thm:injectivity}
Let $\tpl=((\kappa_1,\kappa_2,\omega),\vc x_0,\vc t_0,\vc d_0)$ in $V$ be such that $E_\mathrm{loop}(\tpl)<+\infty$. Then the configuration map $\vc p[\tpl]$ is locally injective on $\mathrm{int}(\Omega)$. 
Furthermore, this mapping is open on $\mathrm{int}(\Omega)$. 
If, in addition, $\tpl$ satisfies \eqref{eq:glob-inj}, then $\vc p[\tpl]$ is also globally injective on $\mathrm{int}(\Omega)$. 
\end{thm}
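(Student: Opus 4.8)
The plan is to establish the three assertions in sequence, since local injectivity is the crux from which both openness and global injectivity follow. First I record the consequence of finite energy: because $E_\mathrm{sh}$ is bounded below and $E_\mathrm{g}$ is finite, $E_\mathrm{loop}(\tpl)<+\infty$ forces $E_\mathrm{ni}(\tpl)=0$, i.e.\ $\tpl\in N$, so that \eqref{eq:ni} holds for almost every $s$. By \eqref{eq:glob-inj} this says precisely that the Jacobian $\det\de\vc p[\tpl]/\de(s,\zeta_1,\zeta_2)$ is nonnegative on $\crs(s)$; being an affine function of $(\zeta_1,\zeta_2)$, it attains its maximum over the compact convex set $\crs(s)$ on the boundary whenever $(\kappa_1(s),\kappa_2(s))\neq(0,0)$, and is therefore \emph{strictly} positive on $\mathrm{int}(\crs(s))$ for a.e.\ $s$.

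For local injectivity I fix $(s_0,\zeta^0)\in\mathrm{int}(\Omega)$ and introduce the scalar ``slice'' function $\Psi(s;\vc y):=\vc t[\tpl](s)\cdot(\vc y-\vc x[\tpl](s))$, which detects the cross-sectional plane containing $\vc y$; note $\Psi(s;\vc y)=0$ whenever $\vc y=\vc p[\tpl](s,\zeta)$, since then $\vc y-\vc x[\tpl](s)$ is a combination of $\vc d[\tpl](s)$ and $\vc t[\tpl](s)\times\vc d[\tpl](s)$, both orthogonal to $\vc t[\tpl](s)$. Since $\vc x[\tpl]\in\sob{2,p}{[0,L];\R^3}$ and the frame is continuous, $\Psi(\cdot;\vc y)$ is absolutely continuous with $\de_s\Psi(s;\vc y)=\vc t[\tpl]'(s)\cdot(\vc y-\vc x[\tpl](s))-1$. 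Writing $\vc y-\vc x[\tpl](s)$ in the moving frame and using $\vc t[\tpl]'\perp\vc t[\tpl]$, one checks the convention-independent identity $\de_s\Psi(s;\vc y)=-\det\de\vc p[\tpl](s,a,b)$, where $(a,b)$ are the in-plane coordinates of $\vc y$ in the frame at $s$. This identity is the decisive point: it lets me invoke the pointwise bound of the previous paragraph for each fixed $s$ separately, so that the possible unboundedness of $\kappa_1,\kappa_2\in L^p$ does no harm. On a small enough cylinder $Q=(s_0-\delta,s_0+\delta)\times B$ about $(s_0,\zeta^0)$, continuity of the frame and of $s\mapsto\crs(s)$ keeps $(a,b)$ inside $\mathrm{int}(\crs(\sigma))$ for every $\vc y\in\vc p[\tpl](Q)$ and a.e.\ $\sigma$ in the relevant $s$-range, whence $\de_s\Psi(\sigma;\vc y)<0$ there. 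Thus if $\vc p[\tpl](s_1,\zeta^1)=\vc p[\tpl](s_2,\zeta^2)=\vc y$ with both points in $Q$, then $\int_{s_1}^{s_2}\de_s\Psi(\sigma;\vc y)\,d\sigma=\Psi(s_2;\vc y)-\Psi(s_1;\vc y)=0$ with a strictly negative integrand, forcing $s_1=s_2$; the affine injectivity of $\zeta\mapsto\vc p[\tpl](s_1,\zeta)$ (the frame vectors being orthonormal) then gives $\zeta^1=\zeta^2$.

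Openness is then immediate: $\vc p[\tpl]$ is continuous and injective on a neighborhood of each point of the open set $\mathrm{int}(\Omega)\subset\R^3$, so by Brouwer's invariance of domain it maps each such neighborhood to an open subset of $\R^3$; since openness is a local property, $\vc p[\tpl]$ is open on $\mathrm{int}(\Omega)$.

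For global injectivity under \eqref{eq:glob-inj} I follow the Ciarlet--Ne\v{c}as argument in the rod form of Schuricht \cite{Sch02}. The area formula for the configuration map (whose Jacobian is nonnegative and lies in $L^p$) gives $\int_\Omega\det\de\vc p[\tpl]=\int_{\vc p[\tpl](\Omega)}\theta(\vc y)\,d\ha^3(\vc y)$, where $\theta(\vc y)$ counts preimages; since $\theta\geq1$ on the image, \eqref{eq:glob-inj} forces $\theta=1$ for $\ha^3$-a.e.\ $\vc y$, i.e.\ injectivity off a null set. To upgrade this to genuine injectivity, suppose $\vc p[\tpl](\xi_1)=\vc p[\tpl](\xi_2)$ with $\xi_1\neq\xi_2$ in $\mathrm{int}(\Omega)$; choosing disjoint neighborhoods on which $\vc p[\tpl]$ is injective and using openness, their images are open sets both containing the common value, hence overlap in a set of positive $\ha^3$-measure on which $\theta\geq2$---contradicting $\theta=1$ a.e. Therefore $\vc p[\tpl]$ is injective on $\mathrm{int}(\Omega)$. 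I expect the local injectivity step to be the main obstacle: the map is only H\"older in $s$ with an $L^p$, hence possibly unbounded, Jacobian, so no uniform estimate $\de_s\Psi<0$ is available by brute force; it is the identity $\de_s\Psi=-\det\de\vc p[\tpl]$ together with the a.e.\ pointwise reading of \eqref{eq:ni}---and the verification that the in-plane coordinates stay within the continuously varying cross-sections, and that the area formula is legitimate at this regularity---that make the argument go through.
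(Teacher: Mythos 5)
Your proposal is correct, and your route to local injectivity---the crux of the theorem---is genuinely different from the paper's. The paper projects the displacement between two purportedly coincident points onto the \emph{fixed} direction $\vc t[\tpl](a)$; because $\vc d$ and $\vc t\times\vc d$ rotate out of the plane orthogonal to $\vc t(a)$, this forces the quantitative apparatus of \eqref{eq:smallness}--\eqref{eq:est3}: a smallness condition on $\int|\omega|$, the bound $\vc t(s_1)\cdot\vc t(s_2)>\tfrac12$, the estimates \eqref{eq:est1}--\eqref{eq:est2}, and the strengthened Jacobian bound $1+\zeta_2\kappa_1(s)-\zeta_1\kappa_2(s)\geq 2\delta(|\kappa_1(s)|+|\kappa_2(s)|)$ extracted from \eqref{eq:ni} using the margin $\overline A_{\varepsilon,3\delta}\subset\mathrm{int}(\Omega)$. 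Your slice function $\Psi(s;\vc y)=\vc t(s)\cdot(\vc y-\vc x(s))$ projects onto the \emph{moving} tangent instead, and the identity $\de_s\Psi=-\vc t\cdot\de_s\vc p[\tpl]=-\det D\vc p[\tpl]$ (at the in-plane coordinates of $\vc y$) makes the twist $\omega$ drop out identically, so only the qualitative fact is needed: the Jacobian is affine in $(\zeta_1,\zeta_2)$, nonnegative on $\crs(s)$ by \eqref{eq:ni}, equal to $1$ at the origin, hence strictly positive on $\mathrm{int}(\crs(s))$ for a.e.\ $s$; integrating $\de_s\Psi<0$ between the two parameter values then yields the contradiction. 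This is cleaner and dispenses with all the $\omega$-estimates, at the cost of the (correctly identified) continuity argument keeping the in-plane coordinates inside the cross-sections. Two minor points: convexity of $\crs(s)$ is neither assumed in the paper nor needed for your strict-positivity claim (interiority of the point suffices); and you should verify that the combination $\vc t'\cdot(\zeta_1\vc d+\zeta_2\vc t\times\vc d)$ appearing in $\de_s\Psi$ is the one bounded by \eqref{eq:ni} under the paper's labeling of the flexural densities---the paper's own \eqref{eq:deltap} carries the same labeling ambiguity relative to \eqref{eq:Cauchy}, and your convention-independent formulation $\de_s\Psi=-\det D\vc p[\tpl]$ is the robust way to state it. For openness both arguments reduce to invariance of domain. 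For global injectivity the paper simply cites Schuricht \cite[Theorem 3.8]{Sch02}; your Ciarlet--Ne\v{c}as argument (area formula, multiplicity one $\ha^3$-a.e., upgraded to everywhere-injectivity via local injectivity and openness) is precisely the content of that citation, with the honest and appropriate caveat that the area formula must be justified for a map that is only $W^{1,p}$ in $s$.
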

\begin{proof}
Let $\tpl$ in $V$ be fixed and consider a point $(\bar{s},\bar{\zeta}_1,\bar{\zeta}_2)$ in $\mathrm{int}(\Omega)$.
Having defined
\[
{A}_{\varepsilon,\delta}:=\big\{(s,\zeta_1,\zeta_2)\in\Omega : |s-\bar{s}|<\varepsilon,\;|\zeta_1-\bar{\zeta}_1|<\delta,\;|\zeta_2-\bar{\zeta}_2|<\delta\big\}\,,
\]
there exist constants $\varepsilon>0$ and $\delta>0$ such that
\begin{equation}\label{eq:well-inside}
\overline{A}_{\varepsilon,3\delta}\subset\mathrm{int}(\Omega)\,.
\end{equation}

By our assumptions on the material cross-sections, there exists $R>0$ such that $|\zeta_1|<R$ and $|\zeta_2|<R$ for any $(s,\zeta_1,\zeta_2)$ in $\Omega$. By condition \eqref{eq:well-inside}, $\delta<R$ and it is possible to choose $\varepsilon$ such that
\begin{equation}\label{eq:smallness}
\int_{\bar s-\varepsilon}^{\bar s+\varepsilon}|\omega(s)|\,ds<\frac{\delta}{3R}
\qquad
\text{and}\qquad 
\vc t[\tpl](s_1)\cdot\vc t[\tpl](s_2)>\frac{1}{2}
\end{equation}
for every $s_1$ and $s_2$ in $(\bar s-\varepsilon,\bar s+\varepsilon)$.
Assuming $E_\mathrm{loop}(\tpl)<+\infty$ and recalling condition \eqref{eq:ni} together with \eqref{eq:well-inside}, we infer that
\begin{equation*}
1+\zeta_2\kappa_1(s)-\zeta_1\kappa_2(s)\geq 2\delta(|\kappa_1(s)|+|\kappa_2(s)|)
\end{equation*}
for almost every $s$ in $(\bar s-\varepsilon,\bar s+\varepsilon)$ and all corresponding $(s,\zeta_1,\zeta_2)$ in ${A}_{\varepsilon,\delta}$.

We now show that $\vc p[\tpl]$ is injective on $A_{\varepsilon,\delta}$. Given two points $(a,\zeta^a_1,\zeta^a_2)$ and $(b,\zeta^b_1,\zeta^b_2)$ in ${A}_{\varepsilon,\delta}$, we assume that
\begin{equation}\label{eq:injectivity}
\vc p[\tpl](a,\zeta^a_1,\zeta^a_2)=\vc p[\tpl](b,\zeta^b_1,\zeta^b_2)\,.
\end{equation}
If $a=b$, we use the definition \eqref{eq:configuration} of the configuration mapping to find that $\zeta^a_1=\zeta^b_1$ and $\zeta^a_2=\zeta^b_2$. We then argue by contradiction, by assuming $a<b$, and we set
\[
\Delta s:=b-a\qquad\text{and}\qquad\tilde{\vc p}(s):=\vc p[\tpl](s,\zeta_1^b,\zeta_2^b)\,.
\]
Then, using \eqref{eq:Cauchy} and \eqref{eq:configuration}, we obtain
\begin{align}
\tilde{\vc p}(b)-\tilde{\vc p}(a)&\mbox{}=\Delta s\int_0^1\tilde{\vc p}'(a+t\Delta s)\,dt\notag\\
&\mbox{}=\Delta s\int_0^1\big[(1-\zeta_1^b\kappa_2+\zeta_2^b\kappa_1)\vc t[\tpl]-\omega\zeta_2^b\vc d[\tpl]+\omega\zeta_1^b(\vc t[\tpl]\times\vc d[\tpl])\big]\,dt\,,\label{eq:deltap}
\end{align}
where $a+t\Delta s$ is the argument of all functions in the second line.
Since 
\[
\big(\vc p[\tpl](b,\zeta^b_1,\zeta^b_2)-\vc p[\tpl](a,\zeta^a_1,\zeta^a_2)\big)\cdot\vc t[\tpl](a)=\big(\tilde{\vc p}(b)-\tilde{\vc p}(a)\big)\cdot\vc t[\tpl](a)\,,
\]
to obtain a contradiction with \eqref{eq:injectivity} it suffices to show that
\begin{equation}\label{eq:contradiction}
\big(\tilde{\vc p}(b)-\tilde{\vc p}(a)\big)\cdot\vc t[\tpl](a)>0\,,
\end{equation}
thereby proving that $a=b$ and that $\vc p[\tpl]$ is locally injective.

On introducing
\[
\alpha_1=\int_a^b|\kappa_1(s)|\,ds\,,\qquad\alpha_2=\int_a^b|\kappa_2(s)|\,ds\,,\qquad\text{and}\qquad\alpha_3=\int_a^b|\omega(s)|\,ds\,,
\]
it follows from \eqref{eq:Cauchy} that
\begin{equation}\label{eq:est1}
\sup_{s\in(a,b)}|\vc d[\tpl](s)\cdot\vc t[\tpl](a)|\leq\frac{\alpha_2+\alpha_3\alpha_1}{1-\alpha_3^2}
\end{equation}
and, analogously, that
\begin{equation}\label{eq:est2}
\sup_{s\in(a,b)}|\vc t[\tpl](s)\times\vc d[\tpl](s)\cdot\vc t[\tpl](a)|\leq\frac{\alpha_1+\alpha_3\alpha_2}{1-\alpha_3^2}\,.
\end{equation}

On multiplying \eqref{eq:deltap} by $\vc t[\tpl](a)$ and, recalling \eqref{eq:smallness}, \eqref{eq:est1}, and \eqref{eq:est2}, we easily obtain
\begin{equation}\label{eq:est3}
\big(\tilde{\vc p}(b)-\tilde{\vc p}(a)\big)\cdot\vc t[\tpl](a)
\geq \Delta s \int_0^1\bigg(\delta(|\kappa_1|+|\kappa_2|)-R|\omega|\frac{\alpha_1+\alpha_2}{1-\alpha_3}\bigg)\,dt\geq\frac{\delta}{2}(\alpha_1+\alpha_2)\,.
\end{equation}
Now, if $\alpha_1+\alpha_2>0$, \eqref{eq:est3} implies \eqref{eq:contradiction}. Otherwise, $\kappa_1(s)$ and $\kappa_2(s)$ must both vanish for almost every $s$ in $(a,b)$. This means that $\vc t[\tpl](s)=\vc t[\tpl](a)$ for every $s$ in $(a,b)$; hence,
\[
\big(\tilde{\vc p}(b)-\tilde{\vc p}(a)\big)\cdot\vc t[\tpl](a)=\Delta s>0\,,
\]
which is \eqref{eq:contradiction}.

We have therefore established the local injectivity of $\vc p[\tpl]$ and moreover that, being continuous, $\vc p[\tpl]$ is an open mapping on $\mathrm{int}(\Omega)$.  Given the local injectivity, the global injectivity follows from condition \eqref{eq:glob-inj}, as proved by Schuricht \cite[Theorem 3.8]{Sch02}.
\end{proof}

\begin{lem}\label{lem:weaklyclosed}
Let a continuous mapping $\vc\ell:[0,L]\to\R^3$ with $\vc\ell(L)=\vc\ell(0)$, the {global gluing conditions}, and a real constant $M$ be given. Let also the clamping parameters $\vc x_0$, $\vc t_0$, and $\vc d_0$ belonging to $\R^3$ be given. Then the set
\[
U:=\big\{
\tpl=((\kappa_1,\kappa_2,\omega),\vc x_0,\vc t_0,\vc d_0)\in V:E_\mathrm{loop}(\tpl)<M\text{ and \eqref{eq:closure}--\eqref{eq:knot} hold}
\big\}
\]
is weakly closed in $V$.
\end{lem}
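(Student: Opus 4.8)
The plan is to prove sequential weak closedness: given a sequence $\tpl^{(n)}\wto\tpl$ in $V$ with each $\tpl^{(n)}\in U$, I would show $\tpl\in U$. Since the clamping parameters $\vc x_0,\vc t_0,\vc d_0$ are fixed and common to every element of $U$, weak convergence in $V$ reduces to $\tpl_1^{(n)}=(\kappa_1^{(n)},\kappa_2^{(n)},\omega^{(n)})\wto\tpl_1$ in $\leb{p}{[0,L];\R^3}$. The crucial preliminary step is to upgrade this weak convergence to \emph{uniform} convergence of the reconstructed fields: $\vc x[\tpl^{(n)}]\to\vc x[\tpl]$, $\vc t[\tpl^{(n)}]\to\vc t[\tpl]$, and $\vc d[\tpl^{(n)}]\to\vc d[\tpl]$ in $C^0([0,L];\R^3)$. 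To obtain this I would first note that a weakly convergent sequence is bounded in $\leb{p}{[0,L];\R^3}$; then, using \eqref{eq:Cauchy} together with $|\vc t|=|\vc d|=1$ and H\"older's inequality, the fields $\vc t[\tpl^{(n)}]$ and $\vc d[\tpl^{(n)}]$ are uniformly bounded and equi-H\"older-continuous with exponent $1-1/p$, while $\vc x[\tpl^{(n)}]$ is equi-Lipschitz. By Arzel\`a--Ascoli some subsequence converges uniformly; passing to the limit in the integral form of \eqref{eq:Cauchy}, where each product of a weakly convergent density with a uniformly convergent field converges by the weak-times-strong principle, identifies the limit as a solution of \eqref{eq:Cauchy}--\eqref{eq:CauchyIC} with datum $\tpl_1$. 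By the uniqueness guaranteed by Carath\'eodory's theorem the limit equals $(\vc x[\tpl],\vc t[\tpl],\vc d[\tpl])$, and since it is independent of the subsequence the whole sequence converges uniformly. Consequently $\vc p[\tpl^{(n)}]\to\vc p[\tpl]$ uniformly on $\Omega$.

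With uniform convergence in hand, the analytic constraints pass to the limit almost at once. The closure and clamping conditions \eqref{eq:closure}--\eqref{eq:clamping} are evaluations of the reconstructed fields at the endpoints $s=0,L$, hence preserved. The energy bound is inherited from the weak lower semicontinuity of $E_\mathrm{loop}$ established above, giving $E_\mathrm{loop}(\tpl)\le\liminf_n E_\mathrm{loop}(\tpl^{(n)})\le M$; in particular $\tpl$ lies in the weakly closed convex set $N$ cut out by \eqref{eq:ni}. For the global non-interpenetration condition \eqref{eq:glob-inj} I would treat the two sides separately. The left-hand side is a continuous affine functional of $\tpl_1$, namely $\int_\Omega 1$ plus the moments of $\kappa_1,\kappa_2$ against the bounded cross-sectional weights $\int_{\crs(s)}\zeta_i$, hence weakly continuous, so it converges to its value at $\tpl$. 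The right-hand side $\ha^3(\vc p[\tpl^{(n)}](\Omega))$ is \emph{upper} semicontinuous under the uniform convergence of $\vc p[\tpl^{(n)}]$, because $\vc p[\tpl^{(n)}](\Omega)$ is contained in the $\norm{\vc p[\tpl^{(n)}]-\vc p[\tpl]}{\infty}$-neighborhood of the compact set $\vc p[\tpl](\Omega)$, whose volume decreases to $\ha^3(\vc p[\tpl](\Omega))$ as the radius tends to zero. Passing to the limit in the inequality $\text{(LHS)}_n\le\text{(RHS)}_n$ then yields \eqref{eq:glob-inj} for $\tpl$.

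The remaining, and most delicate, constraints are the topological ones: the knot type \eqref{eq:knot} and the link type entering the global gluing conditions. For these I would argue that the uniform convergence of the midlines $\vc x[\tpl^{(n)}]\to\vc x[\tpl]$, and of the auxiliary curves $\vc x[\tpl^{(n)}]+\epsilon\vc d[\tpl^{(n)}]$, together with the non-interpenetration of matter (which, as quantified through Theorem~\ref{thm:injectivity}, endows each configuration with a tube of definite non-degenerate radius about its midline), keeps the limit midline embedded and forces the integer-valued linking number and the isotopy class to be \emph{locally constant} along the sequence. This is precisely the mechanism exploited by Schuricht~\cite[Sect.~4]{Sch02}, whose argument I would invoke to conclude that $\vc x[\tpl]\simeq\vc\ell$ and that the prescribed angle between $\vc d[\tpl](L)$ and $\vc d[\tpl](0)$ and the link type are retained in the limit. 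I expect this topological stability, and in particular the verification that positive cross-sectional thickness prevents the isotopy class from degenerating under the mere uniform limit, to be the main obstacle; the analytic steps above are comparatively routine once the weak-to-uniform upgrade is secured.
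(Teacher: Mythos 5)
Your proposal is correct and takes essentially the same route as the paper, which disposes of the lemma in one line by deferring to Lemmas 3.9, 4.5, and 4.6 of Schuricht \cite{Sch02}; your weak-to-uniform upgrade via Arzel\`a--Ascoli and Carath\'eodory uniqueness, the affine/upper-semicontinuity treatment of \eqref{eq:glob-inj}, and the appeal to Schuricht's topological stability arguments for the knot and link types are exactly the content of those three cited lemmas. In effect you have written out the details that the paper outsources, so there is nothing to add.
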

\begin{proof}
If $U=\emptyset$ the assertion is true. If $U\neq\emptyset$, then the proof is a mere restatement of those of Lemma 3.9, Lemma 4.5, and Lemma 4.6 of Schuricht \cite{Sch02}.
\end{proof}

\begin{figure}
\begin{center}
\includegraphics[width=0.95\textwidth]{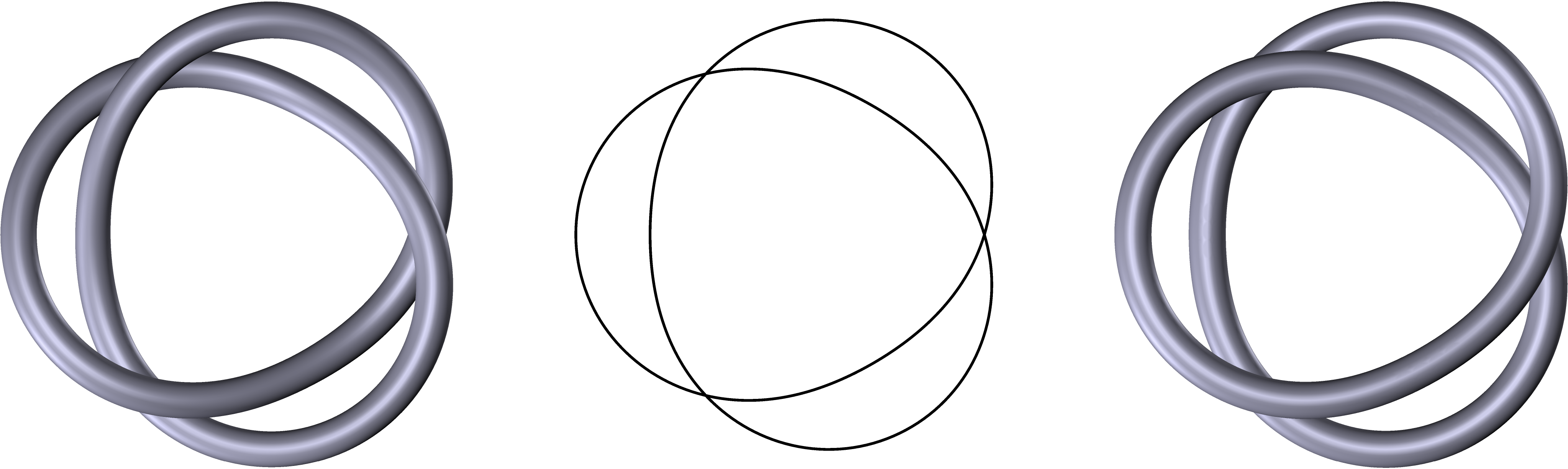}
\end{center}
\caption{A nonvanishing cross-sectional thickness is essential for distinguishing knot types in the presence of self-contact. As a trivial example, the trefoil knot (left) and the unknot (right) are clearly distinct when the cross-sections have a nonvanishing thickness, even in the presence of self-contact. On the contrary, in the vanishing-thickness limit (center), reaching self-contact from each of the two knotted configurations produces topologically equivalent structures and the distinction between a knot and an unknot is lost.}
\label{fig:thickness}
\end{figure}

We stress that the global injectivity condition \eqref{eq:glob-inj} is crucial in ensuring the closure proved in Lemma \ref{lem:weaklyclosed}. Indeed, just as the physical non-interpenetration of matter makes it impossible to change the type of a knot without tearing the loop that forms it, global injectivity entails that the knot type is preserved when passing to the limit in a sequence within the constrained set $U$. In particular, the fact that constrained sets defined by different knot types are well separated in $V$ stems from the nonvanishing thickness of the cross-sections and condition \eqref{eq:glob-inj} and would be inevitably lost in the vanishing-thickness limit, as illustrated in Figure \ref{fig:thickness}.

\section{Existence {of global minimizers}}\label{sec:existence}

In this section, we prove the existence of a solution to the Kirchhoff--Plateau problem. Specifically, we establish the existence of a global minimizer in $V$ for the total energy $E_\mathrm{KP}$, under appropriate physical and topological conditions. 
As necessary steps towards this goal, we first prove the existence of an energy-minimizing configuration for the bounding loop in the absence of the liquid film and then we demonstrate the existence of an area-minimizing spanning surface for a rigid bounding loop. This constitutes a somewhat more physical version of the Plateau problem, since the bounding loop is treated as a three-dimensional object. 

Those two results are quite straightforward, given the results of the previous section and the general theorem proved by De Lellis, Ghiraldin \& Maggi \cite[Theorem 4]{DeLGhi14}.
On the other hand, the proof of our main existence result requires establishing two lemmas in which the arguments of De Lellis, Ghiraldin \& Maggi \cite{DeLGhi14} are adapted to our situation, wherein the set that is spanned by the surface changes along minimizing sequences for the Kirchhoff--Plateau energy. Based on these results, we are able to verify the lower semicontinuity property that is needed to establish the existence of a solution of the Kirchhoff--Plateau problem. Moreover, Lemma \ref{lem:diagonal} provides an interesting extension of the main compactness result of De Lellis, Ghiraldin \& Maggi \cite{DeLGhi14}, which could be used also in other contexts when studying the convergence of minimal surfaces induced by the convergence of the structure spanned by the surfaces.

\begin{thm}[Minimization of the loop energy]\label{thm:loop}
Let a continuous mapping $\vc\ell:[0,L]\to\R^3$ satisfying $\vc\ell(L)=\vc\ell(0)$, the {global gluing conditions}, and the clamping parameters $\vc x_0$, $\vc t_0$, and $\vc d_0$ belonging to $\R^3$ be given. If there exists $\tilde{\tpl}=(\tilde{\tpl}_1,\vc x_0,\vc t_0,\vc d_0)$ in $V$ such that $E_\mathrm{loop}(\tilde{\tpl})<+\infty$ and that complies with all of the constraints \eqref{eq:closure}--\eqref{eq:knot}, then there exists
a minimizer $\tpl=(\tpl_1,\vc x_0,\vc t_0,\vc d_0)$ belonging to $V$ for the loop energy functional $E_\mathrm{loop}$ and obeying the same constraints.
\end{thm}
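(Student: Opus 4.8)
The plan is to run the direct method of the calculus of variations, relying on three facts already in place: the weak lower semicontinuity of $E_\mathrm{loop}$ recorded when defining \eqref{eq:Eloop}, the coercivity forced by the growth condition $f(\mathsf a,s)\geq C_1|\mathsf a|^p+C_2$, and the weak closure of the constraint set furnished by Lemma \ref{lem:weaklyclosed}. First I would set $m:=\inf\{E_\mathrm{loop}(\tpl):\tpl=(\tpl_1,\vc x_0,\vc t_0,\vc d_0)\in V \text{ satisfies } \eqref{eq:closure}\text{--}\eqref{eq:knot}\}$. This infimum is finite above because, by hypothesis, $\tilde\tpl$ is an admissible competitor with $E_\mathrm{loop}(\tilde\tpl)<+\infty$. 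It is also bounded below: the growth condition gives $E_\mathrm{sh}(\tpl)\geq C_1\norm{\tpl_1}{L^p}^p+C_2 L$, the term $E_{\mathrm{ni}}$ is nonnegative, and—this is the one genuinely model-specific observation—since the midline is parametrized by arc length, $|\vc t[\tpl]|\equiv 1$, whence $|\vc x[\tpl](s)-\vc x_0|\leq s\leq L$ and therefore $|\vc p[\tpl]|\leq|\vc x_0|+L+2R$ pointwise on $\Omega$. Consequently $E_\mathrm g$ is \emph{uniformly} bounded, independently of $\tpl_1$, so $E_\mathrm{loop}\geq C_2 L-C_3$ on the whole competitor class.

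Next I would take a minimizing sequence $\{\tpl^{(n)}\}$ of admissible competitors with $E_\mathrm{loop}(\tpl^{(n)})\to m$. Fixing any $M>m$, one has $\tpl^{(n)}\in U$ for all large $n$, where $U$ is the set of Lemma \ref{lem:weaklyclosed}. Coercivity now bites: combining $E_\mathrm{loop}(\tpl^{(n)})\leq M$ with the uniform lower bound on $E_\mathrm g$ and the nonnegativity of $E_{\mathrm{ni}}$ yields a bound on $\norm{\tpl_1^{(n)}}{L^p}$. Because $\leb{p}{[0,L];\R^3}$ is reflexive for $p$ in $(1,\infty)$ and the clamping components $\vc x_0,\vc t_0,\vc d_0$ are held fixed, the space $V$ is reflexive and, passing to a subsequence, $\tpl^{(n_k)}\wto\tpl^*$ weakly in $V$.

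Finally I would close the argument. By the weak closure of $U$ proved in Lemma \ref{lem:weaklyclosed}, the weak limit $\tpl^*$ again lies in $U$, hence carries the prescribed clamping parameters and satisfies every constraint \eqref{eq:closure}--\eqref{eq:knot}; in particular it is an admissible competitor, so $E_\mathrm{loop}(\tpl^*)\geq m$. On the other hand, the weak lower semicontinuity of $E_\mathrm{loop}$ gives $E_\mathrm{loop}(\tpl^*)\leq\liminf_k E_\mathrm{loop}(\tpl^{(n_k)})=m$. The two inequalities force $E_\mathrm{loop}(\tpl^*)=m$, so $\tpl^*=(\tpl_1^*,\vc x_0,\vc t_0,\vc d_0)$ is the sought minimizer.

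The skeleton above is routine; the real difficulty is entirely concentrated in the stability of the constraints under weak limits, namely that the closure and clamping conditions \eqref{eq:closure}--\eqref{eq:clamping}, the global injectivity inequality \eqref{eq:glob-inj}, and especially the \emph{discrete} knot-type condition \eqref{eq:knot} survive the passage $\tpl^{(n_k)}\wto\tpl^*$. This is exactly what Lemma \ref{lem:weaklyclosed} supplies: the map $\tpl_1\mapsto(\vc x[\tpl],\vc d[\tpl])$ sends weakly convergent sequences in $\leb{p}{[0,L];\R^3}$ to $C^1$-convergent ones (via the compact embedding $\sob{2,p}{[0,L];\R^3}\hookrightarrow C^1$), which renders the pointwise and topological constraints stable, while the nonvanishing cross-sectional thickness together with \eqref{eq:glob-inj} keeps configurations of distinct knot type well separated in $V$. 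With that lemma granted, everything else is bookkeeping.
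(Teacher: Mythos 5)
Your argument is correct and follows essentially the same route as the paper's proof: the direct method using the coercivity from the growth condition on $f$, the weak lower semicontinuity of $E_\mathrm{loop}$, and the weak closure of the constraint set $U$ from Lemma~\ref{lem:weaklyclosed}. Your additional observation that $E_\mathrm{g}$ is uniformly bounded (since $|\vc t[\tpl]|\equiv 1$ confines $\vc p[\tpl]$ to a fixed ball) is a worthwhile explicit justification of the boundedness of $U$ that the paper leaves implicit.
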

\begin{proof}
Consider a minimizing sequence $\{\tpl_k\}$ such that $E_\mathrm{loop}(\tpl_k)<M$ for some $M$ in $\R$, with $\tpl_k$ belonging to the subset $U$ of the reflexive Banach space $V$ introduced in Lemma~\ref{lem:weaklyclosed}. The existence of a competitor $\tilde{\tpl}$ guarantees that $U$ is not empty. By the coercivity of $E_\mathrm{loop}$, $U$ is bounded in $V$. Hence, it is possible to extract a weakly converging subsequence $\tpl_{k_i}\wto\tpl$. Since, by Lemma~\ref{lem:weaklyclosed}, $U$ is weakly closed, $\tpl$ must belong to $U$ and, by the weak lower semicontinuity of  $E_\mathrm{loop}$, we conclude that $\liminf_i E_\mathrm{loop}(\tpl_{k_i})\geq E_\mathrm{loop}(\tpl)$. Since $\{\tpl_{k_i}\}$ is a minimizing sequence, this proves that the weak limit $\tpl$ is indeed a minimizer.
\end{proof}

From now on, for any  $\tpl$ in $V$, we denote by $\loo$ the image of $\Omega$ under the configuration map $\vc p[\tpl]$  and by $\mathcal S[{\tpl}]$ the set of all \mbox{$\mathcal D_{\loo}$-spanning} sets of $\Lambda [{\tpl}]$, with $\mathcal D_\loo$ defined as in Section \ref{sec:spanning}. Moreover, for any $\gamma \colon S^1 \to \mathbb R^3$ and for any $r>0$ we denote by $U_r(\gamma)$ the tubular neighborhood  of radius $r$ around $\gamma$.

\begin{thm}[Minimization of the surface energy]\label{thm:Plateau}
Fix $\tpl$ in $V$. If
\[
\alpha:=\inf\big\{E_\mathrm{film}(S):S\in\mathcal S[{\tpl}]\big\}<+\infty\,,
\]
then there exists a relatively closed subset $M[\tpl]$ of $\R^3\setminus\loo$ that is a \mbox{$\mathcal D_\loo$-spanning} set of $\loo$ with $E_\mathrm{film}(M[\tpl])=\alpha$.
Furthermore, $M[\tpl]$ is an $(\mathbf M,0,\infty)$-minimal set in $\R^3\setminus\loo$ in the sense of Almgren \cite{Alm76}. {In particular, $M[\tpl]$ is countably $\ha^2$-rectifiable.} 
\end{thm}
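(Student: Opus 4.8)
The plan is to recognize Theorem~\ref{thm:Plateau} as a direct instance of the existence-and-regularity result of De Lellis, Ghiraldin \& Maggi \cite[Theorem 4]{DeLGhi14}, applied with the compact set $H=\loo$ and the good class $\mathcal C=\mathcal D_\loo$. That theorem takes as input a compact set $H\subset\R^3$, a family of smooth embeddings of $S^1$ into $\R^3\setminus H$ that is closed by homotopy, and the finiteness of the infimum of $\ha^2$ among the corresponding spanning sets; it returns a relatively closed spanning set realizing that infimum which is, moreover, $(\mathbf M,0,\infty)$-minimal. Accordingly, the entire proof reduces to checking that our data meet these three hypotheses, after which the conclusion is quoted verbatim.

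First I would verify that $\loo$ is compact, so that $\R^3\setminus\loo$ is open, the fundamental group $\pi_1(\R^3\setminus\loo)$ is well defined, and ``relatively closed in $\R^3\setminus\loo$'' makes sense. The domain $\Omega$ is bounded, being contained in $[0,L]\times[-R,R]^2$ by the uniform bound on the cross-sections, and is compact under the standing assumptions on the family $\{\crs(s)\}$. Since $\vc x[\tpl]$ lies in $\sob{2,p}{[0,L];\R^3}$ and $\vc d[\tpl]$ in $\sob{1,p}{[0,L];\R^3}$ with $p>1$, the one-dimensional Sobolev embedding makes $\vc x[\tpl]$, $\vc t[\tpl]=\vc x[\tpl]'$, and $\vc d[\tpl]$ continuous, whence $\vc p[\tpl]$ is continuous on $\Omega$ by \eqref{eq:configuration}. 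Therefore $\loo=\vc p[\tpl](\Omega)$ is the continuous image of a compact set and is itself compact.

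Next I would record that $\mathcal D_\loo$, consisting of the smooth embeddings $S^1\to\R^3\setminus\loo$ that are not homotopic to a constant, was already shown in Section~\ref{sec:spanning} to be closed by homotopy, so it qualifies as an admissible good class; and the standing hypothesis $\alpha<+\infty$ supplies both the required finiteness and the nonemptiness of the competitor class $\mathcal S[\tpl]$. With all three hypotheses in force, \cite[Theorem 4]{DeLGhi14} yields a relatively closed $\mathcal D_\loo$-spanning set $M[\tpl]\subset\R^3\setminus\loo$ minimizing $\ha^2$; since $E_\mathrm{film}=2\sigma\ha^2$ with $\sigma>0$ constant, this set also minimizes $E_\mathrm{film}$ and satisfies $E_\mathrm{film}(M[\tpl])=\alpha$. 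The same theorem gives that $M[\tpl]$ is $(\mathbf M,0,\infty)$-minimal in $\R^3\setminus\loo$ in the sense of Almgren \cite{Alm76}, and the countable $\ha^2$-rectifiability is then a structural property of such minimal sets, guaranteed by the regularity theory of Almgren \cite{Alm76} and Taylor \cite{Tay76}.

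Because the statement fixes $\tpl$, there is essentially no analytical obstacle internal to this proof: the genuine work---the weak-$\ast$ compactness of the measures $\ha^2\res S_k$ along a minimizing sequence, the preservation of the spanning condition in the limit, and the passage to minimality and regularity---is exactly what \cite{DeLGhi14} establishes abstractly. The only point demanding care is the bookkeeping that identifies our objects with their hypotheses, in particular the compactness of $\loo$ above. The substantive difficulty, namely that $\loo$ itself varies along a minimizing sequence for $E_\mathrm{KP}$, lies outside the scope of this theorem and is handled separately in Lemma~\ref{lem:diagonal} and Lemma~\ref{lem:intersection}.
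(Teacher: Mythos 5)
Your proposal is correct and follows exactly the paper's route: the paper likewise proves this theorem as a direct application of De Lellis, Ghiraldin \& Maggi \cite[Theorem 4]{DeLGhi14}, with the countable $\ha^2$-rectifiability deduced from Almgren's regularity theory for $(\mathbf M,0,\infty)$-minimal sets. Your additional verification of the hypotheses (compactness of $\loo$, closedness by homotopy of $\mathcal D_\loo$, finiteness of the infimum) is a useful elaboration of what the paper leaves implicit.
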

\begin{proof}
The proof is a direct application of the general theorem by De Lellis, Ghiraldin \& Maggi \cite[Theorem 4]{DeLGhi14}. {The countably $\ha^2$-rectifiability of $M[\tpl]$ follows from the regularity of $(\mathbf M,0,\infty)$-minimal sets established by Almgren \cite{Alm76}.}
\end{proof}

\begin{defn}\label{defn:Hausdorff}
Let $A$ and $B$ be two closed nonempty subsets of a metric space $(M,d_M)$. The Hausdorff distance between $A$ and $B$ is defined as
\[
d_\ha(A,B):=\max\big\{\sup_{a\in A}\inf_{b\in B}d_M(a,b),\sup_{b\in B}\inf_{a\in A}d_M(a,b)\big\}\,.
\]
The topology induced by $d_\ha$  on the space of closed nonempty subsets of $M$ is the Hausdorff topology.
\end{defn}
\begin{lem}\label{lem:diagonal}
Consider a sequence of closed nonempty subsets $\Lambda_k$ of $\R^3$ converging in the Hausdorff topology to the closed set $\Lambda\neq\emptyset$. Assume that, for every $k$ in $\N$, we have countably $\ha^2$-rectifiable sets $S_k$ belonging to $\mathcal P(\Lambda_k)$, where $\mathcal P(\Lambda_k)$ is a good class in the sense of De Lellis, Ghiraldin \& Maggi \cite[Definition 1]{DeLGhi14}, and such that
\[
\ha^2(S_k)=\inf\big\{\ha^2(S):S\in\mathcal P(\Lambda_k)\big\}<+\infty\,.
\]
Then the measures $\mu_k:=\ha^2\res S_k$ constitute a bounded sequence, $\mu_k\stackrel{*}{\rightharpoonup}\mu$ up to the extraction of a subsequence, and the limit measure satisfies
\[
\mu\geq\ha^2\res S_\infty\,,
\] 
where $S_{\infty}:={\rm spt}(\mu)\setminus\Lambda$ is a countably $\ha^2$-rectifiable set.
\end{lem}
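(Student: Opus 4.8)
The plan is to transcribe the compactness and lower-bound scheme of De Lellis, Ghiraldin \& Maggi \cite{DeLGhi14} to the present situation, in which the spanned sets $\Lambda_k$ vary with $k$, exploiting the Hausdorff convergence $\Lambda_k\to\Lambda$ at every step. First I would prove the uniform bound $\sup_k\ha^2(S_k)<+\infty$. Fixing some $k_0$ for which a finite-area element $T$ of the good class $\mathcal P(\Lambda_{k_0})$ exists, I would transport $T$ to an admissible competitor $T_k\in\mathcal P(\Lambda_k)$ for all large $k$ by composing with a near-identity bi-Lipschitz map furnished by the $\varepsilon$-Hausdorff-closeness of $\Lambda_k$ and $\Lambda_{k_0}$; since such a map is homotopic to the identity on the relevant complements, it carries a $\mathcal D_{\Lambda_{k_0}}$-spanning set to a $\mathcal D_{\Lambda_k}$-spanning set, with $\ha^2(T_k)\le C\ha^2(T)$. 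Minimality of $S_k$ then yields $\ha^2(S_k)\le\ha^2(T_k)\le C'$, uniformly in $k$. Consequently the measures $\mu_k=\ha^2\res S_k$ are uniformly bounded, and by the weak-$*$ compactness of Radon measures (Banach--Alaoglu) a subsequence satisfies $\mu_k\stackrel{*}{\rightharpoonup}\mu$; passing to a further subsequence I may also assume that the relatively closed sets $S_k$ converge locally in the Hausdorff topology, so that ${\rm spt}(\mu)$ lies in the limit and $S_\infty={\rm spt}(\mu)\setminus\Lambda$ is well defined.

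The core of the argument is a unit lower density bound for $\mu$ on $S_\infty$. Because $\ha^2(S_k)=\inf\{\ha^2(S):S\in\mathcal P(\Lambda_k)\}$, each $S_k$ is an $(\mathbf M,0,\infty)$-minimal set in $\R^3\setminus\Lambda_k$ in the sense of Almgren \cite{Alm76} (this is precisely the content of Theorem \ref{thm:Plateau}), and therefore obeys the monotonicity formula and the bound $\ha^2(S_k\cap B_r(y))\ge\pi r^2$ at every $y\in S_k$ and every $r$ with $B_r(y)\subset\R^3\setminus\Lambda_k$. Now fix $x\in S_\infty$. Since $\Lambda$ is closed, $d(x,\Lambda)=2\eta>0$, and the Hausdorff convergence $\Lambda_k\to\Lambda$ gives $B_\eta(x)\cap\Lambda_k=\emptyset$ for all large $k$. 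As $x\in{\rm spt}(\mu)$, the positivity $\mu(B_r(x))>0$ together with weak-$*$ lower semicontinuity on open balls provides points $x_k\in S_k$ with $x_k\to x$. For $s<r<\eta$ and $k$ large one has $B_s(x_k)\subset B_r(x)\subset\R^3\setminus\Lambda_k$, whence $\mu_k(\overline{B_r(x)})\ge\mu_k(B_s(x_k))\ge\pi s^2$; using the upper semicontinuity of the evaluation on closed balls under weak-$*$ convergence and letting $s\uparrow r$ gives $\mu(\overline{B_r(x)})\ge\pi r^2$. Hence $\liminf_{r\to0}\mu(B_r(x))/(\pi r^2)\ge1$ for every $x\in S_\infty$.

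With this bound in hand, a standard density comparison theorem for Radon measures, applied on the Borel set $S_\infty$, yields $\mu\res S_\infty\ge\ha^2\res S_\infty$, that is $\mu\ge\ha^2\res S_\infty$. The countable $\ha^2$-rectifiability of $S_\infty$ I would then obtain by identifying $S_\infty$ as an $(\mathbf M,0,\infty)$-minimal set in $\R^3\setminus\Lambda$: the uniform upper Ahlfors bound $\ha^2(S_k\cap B_r(y))\le Cr^2$ valid for minimal sets in $\R^3$ makes this class closed under the local Hausdorff convergence at hand, and Almgren's regularity \cite{Alm76} then gives rectifiability, exactly as in Theorem \ref{thm:Plateau}. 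I expect the main obstacle to be the density estimate with a \emph{moving} boundary: one must ensure that the minimality of $S_k$, with the attendant monotonicity and the universal density constant, survives uniformly in $k$ on balls that are disjoint from $\Lambda_k$ only for $k$ large, and that the lower bound is correctly transferred to $\mu$ across the shifting approximating points $x_k$ and the weak-$*$ limit. A secondary but genuinely delicate point is the boundedness step, where the transported competitors $T_k$ must be shown to remain $\mathcal D_{\Lambda_k}$-spanning; this is where the stability of the homotopy-based spanning condition under small deformations, and hence the Hausdorff convergence $\Lambda_k\to\Lambda$, is essential.
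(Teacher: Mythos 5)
Your core argument is the same as the paper's: the proof there is a two-line deferral to the proof of Theorem 2 of De Lellis, Ghiraldin \& Maggi \cite{DeLGhi14}, with exactly the observation you make as the ``minor modification'' --- namely that for $\vc x\in S_\infty$ one has $d(\vc x,\Lambda)>0$, hence by Hausdorff convergence $B_\eta(\vc x)\cap\Lambda_k=\emptyset$ for large $k$, so the uniform lower density bound for the minimizers $S_k$ applies asymptotically on balls around $\vc x$, passes to $\mu$ by weak-$*$ semicontinuity on closed balls, and the standard density comparison theorem then yields $\mu\geq\ha^2\res S_\infty$. Your route to the countable $\ha^2$-rectifiability of $S_\infty$ (closure of the class of $(\mathbf M,0,\infty)$-minimal sets under local Hausdorff convergence, then Almgren \cite{Alm76}) is heavier machinery than what \cite{DeLGhi14} use (rectifiability from the two-sided density bounds via a Preiss-type criterion), but either works.

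The one step that does not hold up is your proof of the uniform bound $\sup_k\ha^2(S_k)<+\infty$ by transporting a competitor $T\in\mathcal P(\Lambda_{k_0})$ to $\mathcal P(\Lambda_k)$ via a near-identity bi-Lipschitz map. Hausdorff closeness of $\Lambda_k$ and $\Lambda_{k_0}$ does \emph{not} furnish a bi-Lipschitz homeomorphism of $\R^3$ carrying one complement onto the other, nor does it preserve the homotopy-based spanning condition: complements of Hausdorff-close compact sets need not be homeomorphic or even homotopy equivalent (think of drilling a thin tunnel through $\Lambda_{k_0}$), and the good classes $\mathcal P(\Lambda_k)$ in the lemma are abstract, so there is nothing to transport along. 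The honest situation is that uniform boundedness does not follow from the stated hypotheses at all; the paper does not claim to derive it within the lemma's proof and, in the only application (Theorem~\ref{thm:existence}), the bound $\ha^2(S_k)\le C$ is available directly because $\{\tpl_k\}$ is a minimizing sequence with $E_\mathrm{KP}(\tpl_k)<M$. You should either add the uniform bound as a hypothesis or import it from the application rather than attempt to manufacture it from Hausdorff convergence alone.
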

\begin{proof}
The proof of this lemma requires minor modifications of the proof of Theorem 2 of De Lellis, Ghiraldin \& Maggi \cite{DeLGhi14}. It is sufficient to observe that the convergence of $\{\Lambda_k\}$ ensures that, whenever $\vc x\in S_\infty$, we have $d(\vc x,\Lambda_k)>0$ for large enough $k$. Then, all arguments in the proof of Theorem 2 of De Lellis, Ghiraldin \& Maggi \cite{DeLGhi14} are recovered asymptotically.
\end{proof}

\begin{lem}\label{lem:intersection}
Let $\{{\tpl}_k\}$ be a sequence weakly converging to ${\tpl}$ in the subset $U$ of $V$ introduced in Lemma~\ref{lem:weaklyclosed}, let $S_k$ be an element of $\mathcal S[{\tpl}_k]$, and fix $\gamma$ in $\mathcal D_{\loo}$. 
Then, for any $\varepsilon>0$ such that $U_{2\varepsilon}(\gamma)$ is contained in $\mathbb R^3\setminus \loo$, there exists $M=M(\varepsilon)>0$ such that, for any $k$ large enough,
\begin{equation}\label{key}
\mathcal H^2(S_k \cap U_\varepsilon(\gamma))\ge M\,.
\end{equation}
\end{lem}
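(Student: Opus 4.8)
The plan is to reduce the asserted area bound to a single topological fact---that for large $k$ the loop $\gamma$ still fails to bound in the complement of $\Lambda[{\tpl}_k]$---and then to turn the resulting family of forced intersections into an area estimate through a Lipschitz projection (coarea) argument, in the spirit of the lower density bounds of De Lellis, Ghiraldin \& Maggi \cite{DeLGhi14}. First I would upgrade the weak convergence ${\tpl}_k\wto{\tpl}$ to geometric convergence of the loops. The bound $E_\mathrm{loop}({\tpl}_k)<M$ keeps the densities bounded in $\leb{p}{[0,L];\R^3}$, so a standard argument based on the integral form of \eqref{eq:Cauchy}--\eqref{eq:CauchyIC}, the compact embedding $\sob{2,p}{[0,L];\R^3}\hookrightarrow C^1$, and the Arzel\`a--Ascoli theorem gives $\vc x[{\tpl}_k]\to\vc x[{\tpl}]$ and $\vc d[{\tpl}_k]\to\vc d[{\tpl}]$ uniformly. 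Hence $\vc p[{\tpl}_k]\to\vc p[{\tpl}]$ uniformly on $\Omega$ and $\Lambda[{\tpl}_k]\to\loo$ in the Hausdorff topology of Definition \ref{defn:Hausdorff}. By Lemma \ref{lem:weaklyclosed}, moreover, ${\tpl}\in U$, so $\loo$ and every $\Lambda[{\tpl}_k]$ are globally injective (embedded) solid tubes sharing the same knot type.

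Because $U_{2\varepsilon}(\gamma)\cap\loo=\emptyset$, every point of $U_\varepsilon(\gamma)$ lies at distance at least $\varepsilon$ from $\loo$, and the Hausdorff convergence then yields $U_\varepsilon(\gamma)\cap\Lambda[{\tpl}_k]=\emptyset$ for $k$ large. The crucial point is to verify that $\gamma\in\mathcal D_{\Lambda[{\tpl}_k]}$, i.e.\ that $[\gamma]\neq 1$ in $\pi_1(\R^3\setminus\Lambda[{\tpl}_k])$. I would deduce this from the finer structure: $\loo$ and $\Lambda[{\tpl}_k]$ are tubes on the same cross-sections over $C^1$-close, isotopic midlines, so for $k$ large there is an ambient homeomorphism $\Psi_k$ of $\R^3$ with $\Psi_k(\loo)=\Lambda[{\tpl}_k]$, close to the identity and equal to the identity off a small neighborhood of $\loo$, hence fixing the distant loop $\gamma$. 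Then $\Psi_k$ restricts to a homeomorphism $\R^3\setminus\loo\to\R^3\setminus\Lambda[{\tpl}_k]$ with $\Psi_k\circ\gamma=\gamma$, so the nontriviality of $[\gamma]$ transfers. Equivalently, since the complement of a solid tube is homotopy equivalent to that of its core midline, the nontriviality can be tracked through the linking of $\gamma$ with $\vc x[{\tpl}_k]\to\vc x[{\tpl}]$, an integer that stays constant along the convergence.

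Granting $\gamma\in\mathcal D_{\Lambda[{\tpl}_k]}$, I would foliate the solid tube $U_\varepsilon(\gamma)$ by the parallel loops $\gamma_w$ obtained by sliding $\gamma$ to the point $w$ of its normal disk $D_\varepsilon$ of radius $\varepsilon$. Each $\gamma_w$ lies in $U_\varepsilon(\gamma)\subset\R^3\setminus\Lambda[{\tpl}_k]$ and is homotopic to $\gamma$ inside the tube, so $\gamma_w\in\mathcal D_{\Lambda[{\tpl}_k]}$ and the spanning property of $S_k$ forces $S_k\cap\gamma_w\neq\emptyset$ for every $w\in D_\varepsilon$. Writing $\Pi\colon U_\varepsilon(\gamma)\to D_\varepsilon$ for the Lipschitz projection along the foliation, this gives $\Pi\big(S_k\cap U_\varepsilon(\gamma)\big)\supseteq D_\varepsilon$; since a Lipschitz map with constant $L$ satisfies $\ha^2(\Pi(E))\leq L^2\ha^2(E)$, we conclude
\[
\ha^2\big(S_k\cap U_\varepsilon(\gamma)\big)\ \geq\ \frac{1}{L^2}\,\ha^2(D_\varepsilon)\ =\ \frac{\pi\varepsilon^2}{L^2}\ =:\ M(\varepsilon)>0,
\]
with $L$ depending only on $\gamma$ and $\varepsilon$ and not on $k$, which is precisely \eqref{key}.

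The main obstacle is the topological step of the second paragraph: Hausdorff convergence of the solid loops alone does not control the fundamental groups of the complements, and the persistence of the class $[\gamma]$ must be extracted from the $C^1$-convergence of the midlines together with the invariance of the knot type furnished by Lemma \ref{lem:weaklyclosed}. Once the foliating curves are known to be admissible test loops, the projection estimate is routine.
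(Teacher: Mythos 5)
Your proposal follows essentially the same route as the paper's proof: strong (hence uniform) convergence of the midlines keeps $\Lambda[\tpl_k]$ away from $U_\varepsilon(\gamma)$ for large $k$, the tube $U_\varepsilon(\gamma)$ is foliated by parallel copies $\gamma_{\vc y}$ of $\gamma$, each of which must meet $S_k$ by the spanning property, and a Lipschitz projection onto the normal $\varepsilon$-disk converts this into the uniform lower bound $\pi\varepsilon^2/(\mathrm{Lip}\,\hat{\vc\pi})^2$. The only place you go beyond the paper is in making explicit that $[\gamma]$ must remain non-trivial in $\pi_1(\R^3\setminus\Lambda[\tpl_k])$ before the spanning property of $S_k$ can be invoked; the paper leaves this step implicit, and your ambient-homeomorphism argument is an appropriate way to justify it. Be aware, though, that your alternative justification via the linking number of $\gamma$ with the midline is not sufficient in general: when the midline is knotted, $\pi_1(\R^3\setminus\loo)$ contains non-trivial classes with zero linking number, so constancy of the linking number along the sequence does not by itself guarantee persistence of non-triviality.
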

\begin{proof}
Let us fix $\varepsilon>0$ such that $U_{2\varepsilon}(\gamma)$ is contained in $\mathbb R^3\setminus \loo$, denote by $B_\varepsilon(\vc 0_2)$ the open disk of $\R^2$ with radius $\varepsilon$ and centered at the origin of $\R^2$, and consider a diffeomorphism $\Phi\colon S^1 \times B_\varepsilon(\vc 0_2) \to U_{\varepsilon}(\gamma)$ such that $\Phi_{|_{S^1\times \{\vc 0_2\}}}=\gamma$. Let $\vc y$ belong to $B_\varepsilon(\vc 0_2)$ and set $\gamma_{\vc y}:=\Phi_{|_{S^1\times \{\vc y\}}}$. Then $\gamma_{\vc y}$ in $[\gamma]$ represents an element of $\pi_1(\mathbb R^3 \setminus \loo)$. 

Let $\vc x_k$ and $\vc x$ denote the midlines corresponding respectively to ${\tpl}_k$ and ${\tpl}$. Since $\{{\tpl}_k\}$ converges weakly to ${\tpl}$ in $U$, $\{\vc x_k\}$ converges to $\vc x$ strongly in $W^{1,p}([0,L];\mathbb R^3)$. In particular, $\{\vc x_k\}$ converges to $\vc x$ uniformly on $[0,L]$, which implies that, for $k$ sufficiently large, $\Lambda[{\tpl}_k]$ is contained in a neighborhood $W$ of $\Lambda[{\tpl}]$ with $W\cap U_\varepsilon(\gamma)=\emptyset$. Hence, for such $k$ and $\varepsilon$ it follows that, for any $\vc y$ in $B_\varepsilon(\vc 0_2)$, $\gamma_{\vc y}$ belongs to $\mathbb R^3\setminus \Lambda[{\tpl}_k]$, which yields $S_k\cap \gamma_{\vc y}\ne \emptyset$ because $S_k$ is in $\mathcal S[{\tpl}_k]$. 

Take $\vc\pi : S^1 \times B_\varepsilon(\vc 0_2) \to B_\varepsilon(\vc 0_2)$ as the projection on the second factor and let $\hat{\vc\pi}:=\vc\pi \circ \Phi^{-1}$. Then, $\hat{\vc\pi}$ is Lipschitz-continuous and $B_\varepsilon(\vc 0_2)$ is contained in $\hat{\vc \pi}(S_k \cap U_\varepsilon(\gamma))$, which entails that
\[
\pi \varepsilon^2=\ha^2(B_\varepsilon(\vc 0_2))\le\ha^2(\hat{\vc \pi}(S_k \cap U_\varepsilon(\gamma))\le ({\rm Lip}\,\hat{\vc\pi})^2\ha^2(S_k \cap U_\varepsilon(\gamma))\,.
\]
We thus conclude that
\[
\mathcal H^2(S_k \cap U_\varepsilon(\gamma))\ge \frac{\pi \varepsilon^2}{({\rm Lip}\,\hat{\vc\pi})^2}\,,
\]
which establishes the inequality \eqref{key}.
\end{proof}

\begin{thm}[Main existence result]\label{thm:existence}
Let a continuous mapping $\vc\ell:[0,L]\to\R^3$ satisfying $\vc\ell(L)=\vc\ell(0)$, the {global gluing conditions}, and the clamping parameters $\vc x_0$, $\vc t_0$, and $\vc d_0$ belonging to $\R^3$ be given. If there exists $\tilde{\tpl}=(\tilde{\tpl}_1,\vc x_0,\vc t_0,\vc d_0)$ in $V$ such that $E_\mathrm{KP}(\tilde{\tpl})<+\infty$ and which complies with the constraints \eqref{eq:closure}--\eqref{eq:knot}, then there exists a solution $\tpl=(\tpl_1,\vc x_0,\vc t_0,\vc d_0)$ to the Kirchhoff--Plateau problem belonging to $V$, namely a minimizer of the total energy $E_\mathrm{KP}$ satisfying \eqref{eq:closure}--\eqref{eq:knot}.
Furthermore, the spanning surface $M[\tpl]$ associated with the energy minimizing configuration by Theorem~\ref{thm:Plateau} is an $(\mathbf M,0,\infty)$-minimal set in $\R^3\setminus\loo$ in the sense of Almgren \cite{Alm76}. {In particular, $M[\tpl]$ is countably $\ha^2$-rectifiable.}
\end{thm}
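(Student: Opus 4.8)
The plan is to run the direct method of the calculus of variations for $E_\mathrm{KP}$ over the constraint set $U$, reusing the compactness established for Theorem~\ref{thm:loop} and passing the film energy to the limit by means of Lemma~\ref{lem:diagonal} and Lemma~\ref{lem:intersection}. First I would fix a minimizing sequence $\{\tpl_k\}$ for $E_\mathrm{KP}$ subject to \eqref{eq:closure}--\eqref{eq:knot}. The hypothesis provides a competitor $\tilde\tpl$ with $E_\mathrm{KP}(\tilde\tpl)<+\infty$, and since the film contribution is nonnegative we have $E_\mathrm{loop}(\tpl_k)\leq E_\mathrm{KP}(\tpl_k)\leq E_\mathrm{KP}(\tilde\tpl)=:C$ for all large $k$; discarding finitely many terms, the sequence lies in a set $U$ of the type treated in Lemma~\ref{lem:weaklyclosed} with $M=C+1$. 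Coercivity of $E_\mathrm{loop}$ bounds $\{\tpl_k\}$ in the reflexive space $V$, so I extract $\tpl_k\wto\tpl$; by Lemma~\ref{lem:weaklyclosed} the weak limit $\tpl$ again lies in $U$ and hence satisfies \eqref{eq:closure}--\eqref{eq:knot}. It then remains to prove the lower semicontinuity $\liminf_k E_\mathrm{KP}(\tpl_k)\geq E_\mathrm{KP}(\tpl)$, and since $\liminf$ is superadditive this splits into the loop and film terms. The loop term is weakly lower semicontinuous by construction, so the crux is the film term $\mathcal E_\mathrm{film}(\tpl):=\inf\{E_\mathrm{film}(S):S\in\mathcal S[\tpl]\}$.

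For the film term I would, for each $k$, take $S_k:=M[\tpl_k]$ to be the area-minimizing spanning set furnished by Theorem~\ref{thm:Plateau} (available because $\mathcal E_\mathrm{film}(\tpl_k)<+\infty$), so that $\mathcal E_\mathrm{film}(\tpl_k)=2\sigma\ha^2(S_k)$ and each $S_k$ is countably $\ha^2$-rectifiable and belongs to the good class $\mathcal P(\Lambda[\tpl_k])$. The strong $W^{1,p}$---hence uniform---convergence of the midlines $\vc x_k\to\vc x$ and of the directors $\vc d_k\to\vc d$, already exploited in Lemma~\ref{lem:intersection}, yields uniform convergence of the configuration maps and therefore Hausdorff convergence $\Lambda[\tpl_k]\to\Lambda[\tpl]$. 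Lemma~\ref{lem:diagonal} then applies with $\Lambda_k=\Lambda[\tpl_k]$: up to a further subsequence $\mu_k:=\ha^2\res S_k\stackrel{*}{\rightharpoonup}\mu$, the set $S_\infty:=\mathrm{spt}(\mu)\setminus\Lambda[\tpl]$ is countably $\ha^2$-rectifiable and relatively closed in $\R^3\setminus\Lambda[\tpl]$, and $\mu\geq\ha^2\res S_\infty$. Weak-$*$ lower semicontinuity of total mass gives $\liminf_k\ha^2(S_k)=\liminf_k\mu_k(\R^3)\geq\mu(\R^3)\geq\ha^2(S_\infty)$.

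The decisive step is to verify that $S_\infty\in\mathcal S[\tpl]$, i.e.\ that the spanning condition survives in the limit; here I would combine Lemma~\ref{lem:intersection} with the portmanteau inequalities for weak-$*$ convergence. Suppose for contradiction that some $\gamma\in\mathcal D_{\Lambda[\tpl]}$ misses $S_\infty$. As $S_\infty$ is relatively closed and $\gamma$ is compact in $\R^3\setminus\Lambda[\tpl]$, there is $\varepsilon>0$ with $U_{2\varepsilon}(\gamma)\subset\R^3\setminus\Lambda[\tpl]$ and $U_\varepsilon(\gamma)\cap S_\infty=\emptyset$; since $U_\varepsilon(\gamma)$ also avoids $\Lambda[\tpl]$ and $\mathrm{spt}(\mu)\subseteq S_\infty\cup\Lambda[\tpl]$, this forces $\mu(U_\varepsilon(\gamma))=0$. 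On the other hand, Lemma~\ref{lem:intersection} applied with radius $\varepsilon/2$ supplies $M>0$ with $\mu_k(U_{\varepsilon/2}(\gamma))=\ha^2(S_k\cap U_{\varepsilon/2}(\gamma))\geq M$ for all large $k$, so, using that $\overline{U_{\varepsilon/2}(\gamma)}\subset U_\varepsilon(\gamma)$ is compact, $\mu(U_\varepsilon(\gamma))\geq\mu(\overline{U_{\varepsilon/2}(\gamma)})\geq\limsup_k\mu_k(\overline{U_{\varepsilon/2}(\gamma)})\geq M>0$, a contradiction. Hence $S_\infty$ meets every curve of $\mathcal D_{\Lambda[\tpl]}$ and is admissible, whence $\mathcal E_\mathrm{film}(\tpl)\leq 2\sigma\ha^2(S_\infty)\leq\liminf_k 2\sigma\ha^2(S_k)=\liminf_k\mathcal E_\mathrm{film}(\tpl_k)$.

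Combining this film estimate with the weak lower semicontinuity of $E_\mathrm{loop}$ gives $\liminf_k E_\mathrm{KP}(\tpl_k)\geq E_\mathrm{KP}(\tpl)$; since $\{\tpl_k\}$ is minimizing and $\tpl\in U$, the limit $\tpl$ is a minimizer satisfying \eqref{eq:closure}--\eqref{eq:knot}. Finally, $E_\mathrm{KP}(\tpl)<+\infty$ entails $\mathcal E_\mathrm{film}(\tpl)<+\infty$, so a last application of Theorem~\ref{thm:Plateau} produces the minimal spanning surface $M[\tpl]$ that is $(\mathbf M,0,\infty)$-minimal in $\R^3\setminus\loo$ and countably $\ha^2$-rectifiable. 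I expect the main obstacle to be exactly the transfer of the spanning condition: because the spanned set $\Lambda[\tpl_k]$ moves along the sequence, one cannot simply test a fixed limiting surface against the $S_k$, and the two lemmas---one controlling the limit measure and the other guaranteeing a uniform amount of area trapped in each homotopically nontrivial tube---must be coordinated through the open/compact portmanteau inequalities to force $S_\infty$ to be spanning.
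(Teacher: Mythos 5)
Your proposal is correct and follows essentially the same route as the paper: the direct method over the weakly closed set $U$, reduction to weak lower semicontinuity of the nonlocal film term, selection of the minimizers $S_k$ from Theorem~\ref{thm:Plateau}, and the combination of Lemma~\ref{lem:diagonal} with Lemma~\ref{lem:intersection} to show by contradiction that the limit support is spanning. Your portmanteau step via the compact tube $\overline{U_{\varepsilon/2}(\gamma)}$ is in fact a slightly more careful justification of the paper's assertion that $\mu(U_\varepsilon(\gamma))=0$ forces $\ha^2(S_k\cap U_\varepsilon(\gamma))\to 0$.
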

\begin{proof}
Consider a minimizing sequence $\{{\tpl}_k\}$ for $E_{\rm KP}$ such that $E_{\rm KP}({\tpl}_k) < M$ for some $M$ in $\mathbb R$. In particular, we then have that $E_{\rm loop}({\tpl}_k) < M$ and we can choose ${\tpl}_k$ in the subset $U$ of $V$ introduced in Lemma \ref{lem:weaklyclosed}. As in the proof of Theorem~\ref{thm:loop}, we can extract a weakly convergent subsequence ${\tpl}_{k_i} \rightharpoonup  \overline {\tpl}$ with $\overline{\tpl} \in U$. To complete the proof, it remains to establish that $E_{\rm KP}$ is weakly lower semicontinuous on $V$. Given the weak lower semicontinuity of $E_{\rm loop}$, this is tantamount to proving that the functional
\begin{equation}\label{funct}
{\tpl} \mapsto \inf\big\{\mathcal H^2(S) : \textrm{$S \in\mathcal S[{\tpl}]$}\big\}
\end{equation} 
is weakly lower semicontinuous.

Fix a weakly convergent sequence ${\tpl}_k \rightharpoonup {\tpl}$ in $U$. Let $S_k$ belonging to $\mathcal S[{\tpl}_k]$ be given by Theorem \ref{thm:Plateau} such that 
\[
\mathcal H^2(S_k) =\inf\big\{\mathcal H^2(S) : \textrm{$S \in\mathcal S[{\tpl}_k]$}\big\}\,.
\]
Without loss of generality, we can assume that $\mathcal H^2(S_k)\le C$ for some $C>0$. For any $k$ in $\N$, let $\mu_k:=\mathcal H^2 \res S_k$. Then, up to the extraction of a subsequence, we have $\mu_k \stackrel{*}{\rightharpoonup}\mu$ on $\mathbb R^3$ and we can set $S_0:={\rm spt}(\mu)\setminus \Lambda[{\tpl}]$. 
On applying Lemma \ref{lem:diagonal} with $\Lambda_k=\Lambda[\tpl_k]$, we deduce that 
\begin{equation*}
\mu \ge \mathcal H^2\res S_0 \textrm{ on subsets of }\mathbb R^3 \setminus \Lambda [{\tpl}]\,.
\end{equation*} 
We next show that $S_0$ belongs to $\mathcal S[{\tpl}]$. Assume by contradiction that there exists $\gamma$ in $\mathcal D_{\Lambda[{\tpl}]}$ with $\gamma \cap S_0= \emptyset$ and pick $\varepsilon$ as given by Lemma \ref{lem:intersection}.
We then find that $\mu(U_{\varepsilon}(\gamma))=0$ and, therefore, that
\[
\lim_k\mathcal H^2(S_k \cap U_{\varepsilon}(\gamma))=0\,,
\]
which contradicts the thesis \eqref{key} of Lemma \ref{lem:intersection}. 
Hence, we obtain the chain of inequalities 
\begin{align*}
\liminf_k\inf\{\mathcal H^2(S) : \textrm{$S \in\mathcal S[{\tpl}_k]$}\} &\mbox{}\ge \liminf_k\mathcal H^2(S_k)\\
&\mbox{}=\liminf_k\mathcal \mu_k(\mathbb R^3)\ge \mu(\mathbb R^3)
\ge \mathcal H^2(S_0)\ge \inf\{\mathcal H^2(S) : \textrm{$S \in\mathcal S[{\tpl}]$}\}\,,
\end{align*}
which establishes the lower semicontinuity of the functional \eqref{funct}.
\end{proof}

\section{Concluding remarks}\label{sec:conclusion}

We introduced a mathematical model for experiments in which a thin filament in the form of a closed loop is spanned by a liquid film. In a variational setting, the model is defined by the sum of the energies of the different components of this system. While the loop is modeled as a nonlinearly elastic rod which is inextensible and unshearable, namely a Kirchhoff rod, the elasticity of the liquid film is described by a homogeneous surface tension. 

Following Schuricht \cite{Sch02}, we required that the loop satisfy the physical constraints of non-interpe\-netration of matter, fixed {global gluing conditions}, and fixed knot type. A crucial point in this treatment is that the cross-sectional thickness of the loop is nonzero, implying that it occupies a nonvanishing volume. This led us to consider a somewhat more physical version of the Plateau problem where the liquid film is represented by a surface with a boundary that is not prescribed, but is free to move along the lateral surface of the three-dimensional bounding loop. Our choice to retain a nonvanishing cross-sectional thickness of the loop, while attributing a vanishing thickness to the liquid film, is justified by the typically large separation between those two length-scales.

By exploiting the framework recently proposed by De Lellis, Ghiraldin \& Maggi \cite{DeLGhi14} for the Plateau problem,
we established the existence of a global minimizer, namely a stable solution, for the coupled Kirchhoff--Plateau problem, in which the boundary of the liquid film lies on the lateral surface of the deformable bounding loop.  This was achieved by means of a dimensional reduction, performed in expressing the total energy of the system as a functional of the geometric descriptors of the rod only. To this end, a strongly nonlocal term, entailing the minimization of the surface energy for a fixed shape of the loop, was introduced and the effectiveness of this strategy guaranteed by the auxiliary proof of existence of a surface realizing such a minimization.
A key step towards the final result is the adaptation, presented in Lemma~\ref{lem:diagonal}, of the main compactness argument of De Lellis, Ghiraldin \& Maggi~\cite{DeLGhi14} needed to deal with the deformability of the bounding loop.

Combining the approaches of Schuricht \cite{Sch02}, for the bounding loop, and of De Lellis, Ghiraldin \& Maggi \cite{DeLGhi14}, concerning the liquid film, we established the existence of a solution that complies with important physical constraints on the topology of the bounding loop and of the surface. Indeed, the latter enjoys the soap film regularity identified by Almgren \cite{Alm76} and Taylor \cite{Tay76}. An important outcome of our analysis is that the existence of a physically relevant solution is obtained irrespective of the relative strength of surface tension compared to the elastic response of the filament from which the bounding loop is made, which instead influences the multiplicity and the qualitative properties of solutions, as discussed below.

Based on the framework introduced above, further investigations can be addressed in two main directions. First, it would be interesting to study the existence of unstable solutions for the Kirchhoff--Plateau problem. We expect that classical techniques used to establish similar results in the context of the Plateau problem will not be easily applicable here, while suitably adapted tools from nonsmooth critical point theory may prove to be required and effective. It would be also important to investigate the unstable solutions generated by bifurcation phenomena, which are expected to be a feature of our model. Indeed, the competition between the action of the spanning film and the elastic response of the filament can trigger the transition between different stability regimes, meanwhile affecting the multiplicity of solutions to the Kirchhoff--Plateau problem.

Secondly, studying a dissipative dynamics of the system considered above represents a challenging task both from the analytical point of view, again due to the lack of smoothness inherent to our setting, and from the mechanical point of view, since a physically consistent representation of the dissipative phenomena at play might not be straightforward. Nevertheless, a dynamical strategy of the proposed kind would likely provide an excellent basis for the implementation of numerical schemes aimed at finding local minima of the Kirchhoff--Plateau functional.
\\
\par
\textbf{Acknowledgments.}
{E.\ F. and G.\ G.\ G. gratefully acknowledge support from the Okinawa Institute of Science and Technology Graduate University with subsidy funding from the Cabinet Office, Government of Japan.
L.\ L. is grateful for the kind hospitality of the Okinawa Institute of Science and Technology Graduate University during the inception of the present work.
The authors thank Jenny Harrison and the anonymous reviewers for their insightful comments.
The authors would also like to thank Stoffel Janssens and David Vazquez Cortes for their help with the preparation of Figure 2.}



\end{document}